\documentclass[pdflatex,sn-mathphys-num]{sn-jnl}% Math and Physical Sciences Numbered Reference Style
%%\documentclass[pdflatex,sn-mathphys-ay]{sn-jnl}% Math and Physical Sciences Author Year Reference Style
%%\documentclass[pdflatex,sn-aps]{sn-jnl}% American Physical Society (APS) Reference Style
%%\documentclass[pdflatex,sn-vancouver-num]{sn-jnl}% Vancouver Numbered Reference Style
%%\documentclass[pdflatex,sn-vancouver-ay]{sn-jnl}% Vancouver Author Year Reference Style
%%\documentclass[pdflatex,sn-apa]{sn-jnl}% APA Reference Style
%%\documentclass[pdflatex,sn-chicago]{sn-jnl}% Chicago-based Humanities Reference Style

%%%% Standard Packages
%%<additional latex packages if required can be included here>

\usepackage{graphicx}%
\usepackage{multirow}%

\usepackage{amssymb}
\usepackage[title]{appendix}%
\usepackage{xcolor}%
\usepackage{textcomp}%
\usepackage{manyfoot}%
\usepackage{booktabs}%
\usepackage{algorithm}%
\usepackage{algorithmicx}%
\usepackage{algpseudocode}%

\usepackage{listings}%

\usepackage[T1]{fontenc}
\usepackage{thm-restate}
\usepackage{enumitem}

\usepackage{ulem}

\setcounter{tocdepth}{2}

%%%%

%%%%%=============================================================================%%%%
%%%%  Remarks: This template is provided to aid authors with the preparation
%%%%  of original research articles intended for submission to journals published 
%%%%  by Springer Nature. The guidance has been prepared in partnership with 
%%%%  production teams to conform to Springer Nature technical requirements. 
%%%%  Editorial and presentation requirements differ among journal portfolios and 
%%%%  research disciplines. You may find sections in this template are irrelevant 
%%%%  to your work and are empowered to omit any such section if allowed by the 
%%%%  journal you intend to submit to. The submission guidelines and policies 
%%%%  of the journal take precedence. A detailed User Manual is available in the 
%%%%  template package for technical guidance.
%%%%%=============================================================================%%%%

%% as per the requirement new theorem styles can be included as shown below
\theoremstyle{thmstyleone}%
\newtheorem{theorem}{Theorem}%  meant for continuous numbers
%%\newtheorem{theorem}{Theorem}[section]% meant for sectionwise numbers
%% optional argument [theorem] produces theorem numbering sequence instead of independent numbers for Proposition
%
\newtheorem{lemma}[theorem]{Lemma}
\newtheorem{corollary}[theorem]{Corollary}
\newtheorem{claim}[theorem]{Claim}

\theoremstyle{thmstyletwo}%

\theoremstyle{thmstylethree}%
\newtheorem{definition}{Definition}%

\newcommand{\polylog}{\mathrm{polylog}}
\newcommand{\dm}{\mathrm{dm}}
\newcommand{\ap}{\mathit{ap}}

\renewcommand{\leq}{\leqslant}
\renewcommand{\geq}{\geqslant}

\raggedbottom

%%\unnumbered% uncomment this for unnumbered level heads

\begin{document}

\title[An Improved Pseudopolynomial Time Algorithm for Subset Sum]{An Improved Pseudopolynomial Time Algorithm for Subset Sum\footnote{A preliminary version of this paper appeared in Proceedings of FOCS 2024.}}

%%=============================================================%%
%% GivenName	-> \fnm{Joergen W.}
%% Particle	-> \spfx{van der} -> surname prefix
%% FamilyName	-> \sur{Ploeg}
%% Suffix	-> \sfx{IV}
%% \author*[1,2]{\fnm{Joergen W.} \spfx{van der} \sur{Ploeg} 
%%  \sfx{IV}}\email{iauthor@gmail.com}
%%=============================================================%%

\author[1]{\fnm{Lin} \sur{Chen}}\email{chenlin198662@zju.edu.cn}

\author*[1]{\fnm{Jiayi} \sur{Lian}}\email{jiayilian@zju.edu.cn}
%\equalcont{These authors contributed equally to this work.}

\author[1]{\fnm{Yuchen} \sur{Mao}}\email{maoyc@zju.edu.cn}
%\equalcont{These authors contributed equally to this work.}

\author[1]{\fnm{Guochuan} \sur{Zhang}}\email{zgc@zju.edu.cn}
%\equalcont{These authors contributed equally to this work.}

\affil[1]{\orgdiv{College of Computer Science and Technology}, \orgname{Zhejiang University}, \country{China}}

%%==================================%%
%% Sample for unstructured abstract %%
%%==================================%%

\abstract{We investigate pseudo-polynomial time algorithms for Subset Sum. Given a multi-set $X$ consisting of $n$ positive integers and a target $t$, Subset Sum asks whether some subset of $X$ sums to $t$. Bringmann proposed an $\widetilde{O}(n + t)$-time algorithm [Bringmann SODA'17]. An open question has naturally arisen: can Subset Sum be solved in $O(n + w)$ time? Here $w$ is the largest integer in $X$. We make progress towards resolving the open question by proposing an $\widetilde{O}(n + \sqrt{wt})$-time algorithm.}

\keywords{Subset sum, pseudo-polynomial time algorithms, additive combinatorics}

%%\pacs[JEL Classification]{D8, H51}

%%\pacs[MSC Classification]{35A01, 65L10, 65L12, 65L20, 65L70}

\maketitle

\section{Introduction}
Given a target $t$ and a multi-set $X$ consisting of $n$ positive integers, Subset Sum asks whether some subset of $X$ sums to $t$. Subset Sum is among Karp's 21 NP-complete problems~\cite{Kar72}, and serves as a hard special case of many other problems, e.g., Knapsack, Integer Programming, and Constrained Shortest Path.

It is well known that Subset Sum admits pseudo-polynomial time algorithms: the textbook dynamic programming~\cite{Bel57} solves Subset Sum in $O(nt)$ time. In recent years, there has been growing interest in searching for more efficient algorithms for Subset Sum. Pisinger presented an $O(nt/\log t)$-time algorithm~\cite{Pis03}, as well as an $O(nw)$-time algorithm~\cite{Pis99}, where $w$ refers to the largest integer in $X$. Koiliaris and Xu~\cite{KX19} showed an $\widetilde{O}(\sqrt{n}t)$-time algorithm\footnote{$\widetilde{O}(T)$ hides polylogarithmic factors in $T$.}, which was later simplified by the same authors~\cite{KX18}.  Bringmann~\cite{Bri17} proposed an $\widetilde{O}(n+t)$-time randomized algorithm that solves Subset Sum with high probability. Later, Jin and Wu~\cite{JW19} presented an alternative $\widetilde{O}(n+t)$ time-randomized algorithm. %Very recently, Chen, Lian, Mao and Zhang~\cite{CLMZ24aSODA} showed an $\widetilde{O}(n+w^{1.5})$-time algorithm. 
Meanwhile, there is a conditional lower bound of $(n+w)^{1-o(1)}$ implied by the Set Cover Hypothesis~\cite{Bri17}
and the Strong Exponential Time Hypothesis~\cite{ABHS22}. An important open question repeatedly mentioned in a series of papers~\cite{ABJ+19, BW21, PRW21, BC22, Jin24} asks whether Subset Sum can be solved in $O(n+w)$ time. Several attempts have been made to resolve the open problem. Polak, Rohwedder and W{\k{e}}grzycki \cite{PRW21} showed that Subset Sum can be solved in $\widetilde{O}(n+w^{5/3})$ time, and it was later improved to $\widetilde{O}(n+w^{3/2})$ time by Chen, Lian, Mao and Zhang~\cite{CLMZ24aSODA} and Jin~\cite{Jin23}, independently. These algorithms outperform the $\widetilde{O}(n+t)$-time algorithm only in the regime $t \gg w^{3/2}$.\footnote{$\gg$ hides polylogarithmic or sub-polynomial factors.}

We remark that the complexity of Subset Sum can be fairly complicated when we further consider the relationship among the parameters $n,t,w$ (see, e.g.~\cite{Cha99,CFG89,GM91,BW21}). In particular, when the input $X$ is a set (i.e., $X$ has no repeated integers), Bringmann and Wellnitz~\cite{BW21} showed that the instance can be solved in $\widetilde{O}(n)$ time if $w\sum_ix_i/n^2\ll t\leq \sum_ix_i/2$, which implies that $n\gg \sqrt{w}$. Such a case is referred to as ``Dense Subset Sum''. They also obtained a similar (but weaker) result for multi-sets. Note that the algorithm does not work for the region $n\leq \sqrt{w}$, and thus it does not contradict the conditional lower bound of $(n+w)^{1-o(1)}$.

\subsection{Our Result}
Our main result is the following.
\begin{restatable}{theorem}{thmmain}
 \label{thm:main}
        Subset Sum can be solved in $\widetilde{O}(n +\sqrt{wt})$ time by a randomized, one-sided-error algorithm with success probability $1 - (n + t)^{-\Omega(1)}$.
\end{restatable}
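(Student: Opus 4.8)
The plan is to combine the two known "regimes" for Subset Sum and interpolate between them using color-coding / hashing, so that the running time becomes $\widetilde O(n+\sqrt{wt})$. The two endpoints are: (i) Bringmann's $\widetilde O(n+t)$-time algorithm, which is efficient when $t$ is small; and (ii) the $\widetilde O(n+w^{3/2})$-type algorithms of Polak--Rohwedder--W\k{e}grzycki and Chen--Lian--Mao--Zhang, which exploit that each element is at most $w$. The target complexity $\sqrt{wt}$ is exactly the geometric mean of $w$ and $t$, which strongly suggests a parameter-balancing argument: split the problem into one part handled by a "$t$-type" subroutine and another handled by a "$w$-type" subroutine, and choose the split point $\Theta(\sqrt{t/w})$ to equalize the two costs.

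\textbf{Step 1: Reduce to a bounded number of elements.} First I would use the standard observation that only $O(t/w \cdot \log)$-ish many elements can matter in any structured solution, or more precisely apply a color-coding step: randomly partition $X$ into $k$ groups and argue that w.h.p.\ a solution (if one exists) can be found that uses at most one element per group for an appropriate $k$. Combined with the fact that all elements lie in $[1,w]$, any subset summing to exactly $t$ uses at least $t/w$ elements, so one should think of the "number of summands" $s$ as ranging in $[t/w, n]$. Bucketing elements by approximate size and handling each bucket separately is the classical way (cf.\ Bringmann, Koiliaris--Xu) to reduce to the case where all elements are within a factor $2$ of each other.

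\textbf{Step 2: The core subroutine on near-uniform elements.} For a bucket where every element is in $[b, 2b]$, a subset of size $s$ sums to something in $[sb, 2sb]$, so the number of relevant summand-counts $s$ is $O(t/b)$, and for each fixed $s$ the set of achievable sums lies in an interval of length $O(sb)$. The key is to compute, for each $s$, the set of sums of exactly-$s$-subsets. Representing each element as $x_i = b + r_i$ with $r_i \in [0,b]$, the sum of an $s$-subset is $sb + (\text{sum of } s \text{ of the } r_i)$, so this reduces to a Subset Sum on the residues $r_i < b$ with the extra cardinality constraint. One can solve this by a sumset-convolution / FFT-based dynamic program over at most $s$ items with values $< b$, whose target range has size $O(sb)$. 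Summing the FFT cost $\widetilde O(sb)$ over the $O(t/b)$ relevant values of $s$ gives $\widetilde O(t/b \cdot (t/b) \cdot b) = \widetilde O(t^2/b)$, which is too slow for small $b$; the fix is to also cap the number of summands by color-coding so that $s = \widetilde O(\sqrt{wt}/b)$ suffices, giving per-bucket cost $\widetilde O((t/b)\cdot \sqrt{wt}/b \cdot b) = \widetilde O(t\sqrt{wt}/b)$, and then summing over the $O(\log w)$ buckets weighted appropriately yields $\widetilde O(\sqrt{wt})$ after the logarithmic bucket sizes telescope. (The precise bookkeeping of how color-coding interacts with the per-bucket cardinality bound is exactly the part that needs care.)

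\textbf{Main obstacle.} The hard part will be making the cardinality bound and the bucketing play together cleanly: color-coding naturally bounds the \emph{total} number of summands used across all buckets, but the FFT cost is paid \emph{per bucket}, so one must argue that w.h.p.\ a solution exists whose summand count \emph{in each size-bucket} is small, and that the residual "leftover" sum after removing the few large elements can be absorbed. I expect the technical heart of the paper is a structural lemma of the form: if some subset sums to $t$, then w.h.p.\ there is one using $\widetilde O(\sqrt{t/w})$ elements from each dyadic bucket (or using $\widetilde O(\sqrt{wt})$ total "sum-weighted" elements), obtained by a probabilistic argument that replaces many small elements by fewer larger ones whenever a bucket is over-used — this is where the $\sqrt{wt}$ genuinely enters, and everything else is FFT-based dynamic programming glued together by the standard $\widetilde O(n)$-time preprocessing (deduplication, bucketing) and a final union bound over the $O(\log(n+t))$ randomized stages to get the claimed $1-(n+t)^{-\Omega(1)}$ success probability.
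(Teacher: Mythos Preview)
Your proposal has a genuine gap and misses the paper's central mechanism. The paper does \emph{not} obtain $\widetilde O(\sqrt{wt})$ by bucketing plus parameter balancing; it uses a \emph{dense-or-sparse} dichotomy driven by additive combinatorics. Concretely, it preprocesses $X$ into $G\cup R\cup D$ where $\sigma(G),\sigma(R)=\widetilde O(\sqrt{wt})$ and $R/d$ hits every residue class modulo any small modulus (Lemma~\ref{lem:dense-subset-sum}). It then computes $\S(D)$ in a tree-like fashion via sparse convolution, but \emph{only} when each level has total size $\widetilde O(\sqrt{wt})$; if a level is too large, a Szemer\'edi--Vu type result (Lemma~\ref{lem:ap-with-small-element}) guarantees a long arithmetic progression inside $\S(P)$ for some $P\subseteq D$, and the residue set $R$ then fills in every multiple of $d$ in $[t-\widetilde O(\sqrt{wt}),t]$ (Lemma~\ref{lem:augment-ap}). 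Random permutation (not just color coding) is used to keep the diameters of intermediate sumsets at $\widetilde O(\sqrt{wt})$.

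Your Step~2 and the speculated structural lemma are where the plan breaks. In a bucket with elements in $[b,2b]$, any subset summing to (a $\Theta(t)$ share of) the target genuinely uses $\Theta(t/b)$ elements; color coding cannot bring this down to $\widetilde O(\sqrt{wt}/b)$, because color coding only isolates elements that \emph{are in the solution}---it does not shrink the solution itself. The exchange argument you sketch (``replace many small elements by fewer larger ones'') need not exist: the instance may contain only elements of size $\Theta(b)$. Consequently your per-bucket cost is $\widetilde O(t)$, not $\widetilde O(\sqrt{wt})$, and the sum $\sum_b t\sqrt{wt}/b$ does not telescope (it is dominated by the smallest $b$). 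The $\sqrt{wt}$ bound in the paper comes not from bounding solution cardinality but from the dichotomy: either sumsets stay sparse (and sparse FFT is cheap), or they become so dense that an arithmetic progression is forced and the answer is read off structurally without further convolution.
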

More precisely, if there is a subset that sums to $t$, our algorithm returns yes with probability $1 - (n + t)^{-\Omega(1)}$. Otherwise, it returns no.

By removing all integers greater than $t$, we can, without loss of generality, assume that $w\leq t$. Our algorithm gives the first improvement over the $\widetilde{O}(n + t)$-time algorithm~\cite{Bri17} within the whole regime $t \geq w$. 

We remark that the algorithm we presented in this paper only solves the decision version of
subset sum. That is, it can only decide whether there is a solution, but constructing a solution would require much more time. This is mainly because the additive combinatorics result used by our algorithm is non-constructive. This additive combinatorics result (Theorem~\ref{thm:ap}) has been made constructive by Chen, Mao, and Zhang~\cite{CMZ26} very recently. Combining our algorithm with their result, a solution can also be found in $\widetilde{O}(n +\sqrt{wt})$ time with high probability.

%\begin{remark}
%Our algorithm only solves the decision %version of Subset Sum. In contrast,   A %similar issue also arises in the %$\widetilde{O}(n)$-time algorithm for Dense %Subset Sum~\cite{BW21}. We will further %elaborate on this issue at the end of the %paper.
%\end{remark}

%We remark that although we use a non-constructive result from additive combinatorics, the algorithm itself can be made constructive. That is, it does not only solve the decision problem but also returns a subset of $X$ that sums to $t$ when the answer is yes.

%\smallskip
%\noindent\textbf{Remark.} Theorem~\ref{thm:main} only resolves the decision version of Subset Sum. In particular, if the answer is yes, we currently do not know how to obtain a solution reconstruction within the same running time. A similar issue also arises in the $\widetilde{O}(n)$-time algorithm for Subset Sum in the regime of $w\sum_ix_i/n^2\ll t\leq \sum_ix_i/2$ by Bringmann and Wellnitz~\cite{BW21}, where a solution reconstruction within the same running time is not known so far. The bottleneck lies in the non-constructive nature of some additive combinatorics result that we utilize, which will be elaborated at the end of next Subsection.  
%}

Combining our algorithm with the result for Dense Subset Sum~\cite{BW21}, we can show that, when the input is a set (rather than a multi-set), Subset Sum can be solved in $\widetilde{O}(n + w^{1.25})$ time, improving upon the previous best running time of $\widetilde{O}(n + w^{1.5})$~\cite{CLMZ24aSODA,Jin23}.
\begin{corollary}\label{coro:1}
    When the input is a set, Subset Sum can be solved in $\widetilde{O}(n + w^{1.25})$ time by a randomized, one-sided-error algorithm with probability $1 - (n + t)^{-\Omega(1)}$.
\end{corollary}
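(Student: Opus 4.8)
The plan is to combine Theorem~\ref{thm:main} with the $\widetilde{O}(n)$-time algorithm for dense Subset Sum on sets from~\cite{BW21}, choosing between them according to how ``dense'' the instance is. Write $\sigma = \sum_{x\in X} x$. First I would normalize the target: if $t > \sigma$ the answer is trivially no, and if $\sigma/2 < t \le \sigma$ then a subset of $X$ sums to $t$ if and only if its complement in $X$ sums to $\sigma - t \le \sigma/2$, so we may replace $t$ by $\sigma - t$; this reduction leaves both $n$ and $w$ unchanged and keeps the input a set. Hence we may assume $t \le \sigma/2$, i.e. $\sigma \ge 2t$. Since $X$ consists of integers in $[1,w]$ we also have $\sigma \le nw$, which gives the key inequality $n \ge \sigma/w \ge 2t/w$.

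Next I would split on $n$. Let $f = (n+t)^{o(1)}$ denote the sub-polynomial slack factor implicit in the hypothesis ``$w\sigma/n^2 \ll t$'' of~\cite{BW21}. \textbf{Case 1: $n \ge f\cdot w^2/t$.} Using $\sigma \le nw$ we get $w\sigma/n^2 \le w^2/n \le t/f$, so the density hypothesis of~\cite{BW21} holds (together with $t\le\sigma/2$), and the instance is solved in $\widetilde{O}(n)$ time. \textbf{Case 2: $n < f\cdot w^2/t$.} Combining this with $n \ge 2t/w$ from the first paragraph yields $2t/w < f w^2/t$, i.e. $t^2 < f w^3/2$, hence $t = O(\sqrt{f}\,w^{3/2})$ and therefore $\sqrt{wt} = O(f^{1/4} w^{5/4})$. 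Theorem~\ref{thm:main} then runs in time $\widetilde{O}(n + \sqrt{wt}) = \widetilde{O}(n + w^{1.25})$. In both cases we output ``yes'' exactly when the invoked subroutine does, so the combined algorithm is Monte Carlo with one-sided error and success probability $1-(n+t)^{-\Omega(1)}$, inheriting this from the two subroutines.

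The one step I would be most careful about is the bookkeeping of hidden factors: the density hypothesis of~\cite{BW21} carries sub-polynomial slack, whereas the target bound $\widetilde{O}(n+w^{1.25})$ only absorbs polylogarithmic factors, so one must either invoke the precise (polylogarithmic) form of the BW21 threshold or be content with a bound of the form $n + w^{1.25+o(1)}$. Everything else is routine: the complementation reduction, the inequality $\sigma \le nw$, and the arithmetic combining $n \le fw^2/t$ with $t \le nw/2$ to force $t = O(w^{3/2})$. I would also sanity-check the degenerate regimes -- $t < w$, which is already covered since then $\sqrt{wt} < w \le w^{1.25}$ and Theorem~\ref{thm:main} alone suffices, and $\sigma < 2t$, which is handled by the normalization step (either a trivial no-answer or the complementation).
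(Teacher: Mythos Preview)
Your proposal is correct and follows essentially the same approach as the paper: use Theorem~\ref{thm:main} unless the BW21 density condition $w\sigma/n^2 \ll t \le \sigma/2$ kicks in, and observe that outside the dense regime one has $t = \widetilde{O}(w^{3/2})$ so that $\sqrt{wt} = \widetilde{O}(w^{1.25})$. The paper phrases this as a three-way split (first on $n$ versus $\sqrt{w}$, then on $t$ versus $w^2/n$), whereas you collapse it to a single threshold $n \gtrless f\,w^2/t$ and derive $t^2 \lesssim w^3$ directly from $2t/w \le n < f w^2/t$; this is a minor streamlining, not a different idea. Your caveat about the sub-polynomial slack in the BW21 hypothesis is apt and applies equally to the paper's own sketch.
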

We give a brief argument. When $n = {O}(\sqrt{w})$, since $t \leq nw \leq {O}(w^{1.5})$, our $\widetilde{O}(n + \sqrt{wt})$-time algorithm will run in $\widetilde{O}(n + w^{1.25})$ time. Now suppose that $n = {\Omega}(\sqrt{w})$. If $t = {O}(w^2\log n/n)$, again, the $\widetilde{O}(n + \sqrt{wt})$-time algorithm will run in $\widetilde{O}(n + w^{1.25})$ time. It remains to consider the case where $n = {\Omega}(\sqrt{w})$ and $t = {\Omega}(w^2\log n/n)$. Since $\sum_ix_i\leq nw$, we have $w\sum_ix_i/n^2 \leq w^2/n\ll t$. Thus, this case can be solved in $\widetilde{O}(n)$ time according to the results for dense Subset Sum~\cite{BW21}. Note that this argument only works for sets. When the input is a multi-set, Dense Subset Sum is not strong enough to make the above argument valid.

\subsection{Technical Overview}
As in many previous works (e.g. \cite{CLMZ24cSTOCPartition,BW21,DJM23}), we reduce Subset Sum to a more general problem of computing (a subset of) the collection $\mathcal{S}(X)$ of all possible subset sums. 

Very recently, Chen, Lian, Mao, and Zhang~\cite{CLMZ24cSTOCPartition} proposed an efficient framework for (approximately) computing $\mathcal{S}(X)$, and obtained an $\widetilde{O}(n+1/\varepsilon)$-time FPTAS (fully polynomial time approximation scheme) that weakly approximates Subset Sum. Algorithm~\ref{alg:sketch} gives a sketch of their framework. It computes $\mathcal{S}_X$ in a tree-like manner: it iteratively constructs a new level of the tree by computing the sumset of every two sets in the previous level. Using sparse convolution algorithms, each level can be constructed in time roughly linear in the total size of the sets in this level. Therefore, if every level of the tree has a small total size, $\mathcal{S}(X)$ can be obtained efficiently.  When some level has a large total size, it can be proved that $\mathcal{S}(X)$ contains an arithmetic progression $a_1, a_2, \ldots, a_k$ with common difference at most $1/\varepsilon$ and $a_1 \leq t \leq a_k$. This immediately implies that some subset sum of $X$ is close to $t$, and therefore, gives a good approximation. The key of this argument is the following additive combinatorics result, which is implied by the work of Szemer{\'e}di and Vu~\cite{SV05}: given $\ell$ sets $A_1,A_2,\cdots,A_{\ell}$ of positive integers not exceeding $a_{\max}$, if their total size $\sum_{i=1}^\ell|A_i|$ is large compared to $a_{\max}$, i.e., $\sum_{i=1}^\ell|A_i|\geq \ell+Ca_{\max}\log a_{\max}$ for some constant $C$, then the sumset $A_1+A_2+\cdots+A_{\ell}$ must contain an arithmetic progression of length at least $a_{\max}$. 

\begin{algorithm}[ht]
    \caption{\texttt{Dense-or-Sparse Framework} (A Sketch)}
    \label{alg:sketch}
    \begin{algorithmic}[1]
    \Statex \textbf{Input:} a multi-set $X$ of nonnegative integers
    \State $S^0_i = \{x_i, 0\}$ for $x_i \in X$
    \For{$j := 1,..., \log n$}
        \State Define $S^j_i := S^{j-1}_{2i-1} + S^{j-1}_{2i}$
        \State Let $u$ be the maximum integer in $S^j_1, S^j_2, S^j_3, \ldots$
        \If{$\sum_i |S^j_i| \gg u$ } 
            \State dense case: stop the computation and apply additive combinatorics tools.
        \Else 
            \State sparse case: compute all sets $S^j_i$ using sparse convolution algorithms
        \EndIf
    \EndFor
    \State \Return $S^{\log n}_1$
    \end{algorithmic}
\end{algorithm}

Adopting the above framework in exact algorithms turns out to be much more challenging. Technical issues that can be resolved easily in approximation setting now become difficult. Below, we briefly explain those issues and our techniques to resolve them.

\paragraph{Maintaining a small threshold for being dense.} 
As in Algorithm~\ref{alg:sketch}, the threshold for a level being dense is roughly $u$, which is the largest integer of the sets in this level.  When a level is just sparse, we may need as much as $\widetilde{O}(u)$ time to construct this level. The maximum integer grows as we construct new levels, and it becomes $\Theta(nw)$ at the last level. A direct application of the framework will require $\Theta(nw)$ time.

In approximation algorithms, one can force $u$ to be $\widetilde{O}(1/\varepsilon)$ by scaling, and therefore, reduce the running time to $\widetilde{O}(1/\varepsilon)$. Scaling, however, is not allowed in exact algorithms. To address this issue, we utilize the technique of random permutation. The crucial observation is that, when we randomly permute the input integers and feed them to the dense-or-sparse framework, the contribution of each set $S^j_i$ to $t$ is close to its expectation. More precisely, with high probability, the contribution of $S^j_i$ to $t$ is within an interval of length $2\sqrt{wt}$ around the expectation. Hence, for each $S^j_i$, there is no need to consider the entire set; it suffices to consider the part within this interval. This method of random permutation (partition) has been used in several recent pseudo-polynomial time algorithms for the knapsack problem (see, e.g.~\cite{BHSS18,BC23,DJM23,HX24,BDP24}).
    
\paragraph{Using a long arithmetic progression.} 
    
In approximation algorithms, the existence of an arithmetic progression with a small common difference immediately gives a good approximation. Such an arithmetic progression is insufficient if we need an exact answer to whether $t \in \mathcal{S}(X)$. To resolve this issue, we use the ideas from Dense Subset Sum~\cite{BW21, GM91}.

Consider the ideal case where no integer $d \geq 2$ can divide lots of the input integers. 
In this case, we can select a small subset $R$ of the input integers such that $\mathcal{S}(R)$ contains all the possible remainders modulo $\Delta$ for all $\Delta \leq \sqrt{t/w}$. Then we apply Algorithm~\ref{alg:sketch} to the remaining integers $X \setminus R$. Suppose that it ends with the dense case. Then we can show that $\mathcal{S}(X \setminus R)$ contains a long arithmetic progression with common difference $\Delta \leq \sqrt{t/w}$. Combining $\mathcal{S}(R)$ and $\mathcal{S}(X\setminus R)$, we have that $\mathcal{S}(X)$ contains an arithmetic progression with common difference $1$. In other words, $\mathcal{S}(X)$ contains every integer in a certain interval. Using standard techniques, we can also ensure that $t$ belongs to this interval, which implies an exact answer that $t \in \mathcal{S}(X)$.

In the general case, it may not be possible for $\mathcal{S}(X)$ to contain all the integers in some long interval. For example, if all the integers in $X$ are multiples of some other integer $d$, then $\mathcal{S}(X)$ can contain only integers that are multiples of $d$. In this case, we can compute a divisor $d$ such that 
    \begin{enumerate}[label={(\roman*)}]
        \item most integers in $X$ are multiples of $d$. Let $X'$ be the set of these integers.

        \item $X'/d$ falls into the ideal case. ($X'/d$ represents the set obtained by dividing each integer in $X$ by $d$.)
    \end{enumerate}
Then we know that $\mathcal{S}(X'/d)$ contains every integer in some long interval, or equivalently, $\mathcal{S}(X')$ contains all the multiples of $d$ in some long interval. Let $D$ be the set of these multiples of $d$. Then we can show that to determine whether $t \in \mathcal{S}(X)$, it suffices to determine whether $t \in D + \mathcal{S}(X\setminus X')$, and the latter can be done efficiently since $X \setminus X'$ has small size.

\paragraph{Combining the above two techniques.}
When combining the above two techniques, the dense-or-sparse framework (Algorithm~\ref{alg:sketch}) is applied to $X' \setminus R$ rather than $X$. Note that the contribution of $X' \setminus R$ to $t$ is unknown at the time we apply the framework. This causes trouble because when applying the random permutation technique, we need to know the expected contribution of $S^j_i$ to $t$. Without knowing the expectation, we cannot locate the interval in which, the contribution of $S^j_i$ to $t$ lies with high probability. We resolve this issue as follows. Let $G = X \setminus X'$. Note that the contribution of the contribution of $X' \setminus R$ to $t$ is at least $t - \sigma(G) - \sigma(R)$ and at most $t$. ($\sigma(G)$ and $\sigma(R)$ represents the sum of all integers in $G$ and $R$,  respectively.) Therefore, even without knowing the exact contribution, we can still locate an interval of length $2\sqrt{wt} + \sigma(R) + \sigma(G)$ in which the contribution of $S^j_i$ to $t$ lies with high probability. As long as $\sigma(R)$ and $\sigma(G)$ is bounded by $\widetilde{O}(\sqrt{wt})$, the random permutation still works. We show an interesting coincidence that the probabilistic and number theoretic arguments meet at $\sigma(G)=\widetilde{\Theta}(\sqrt{wt})$ and $\sigma(R)=\widetilde{\Theta}(\sqrt{wt})$.

The above discussion is based on several simplifications and requires that $t=\Theta(\sum_i x_i)$. When $t$ is significantly smaller than $\sum_i x_i$, we need to further incorporate the color-coding technique from Bringmann~\cite{Bri17}. Moreover, the random permutation technique has to be adjusted to work well with color-coding.

\subsection{Further Related Works}
A more general problem, Bounded Subset Sum, has also been studied in the literature. In Bounded Subset Sum, each input integer $x_i$ can be used up to $u_i$ times. Polak, Rohwedder and W{\k{e}}grzycki \cite{PRW21} showed an $\widetilde{O}(n+w^{5/3})$-time algorithm, which was later improved to an $\widetilde{O}(\min\{n+w^{3/2},nw\})$-time algorithm by Chen, Lian, Mao and Zhang~\cite{CLMZ24aSODA}. Moreover, our algorithm extends naturally to Bounded Subset Sum and runs in time $\widetilde{O}(\sum u_i+\sqrt{wt})$, as the color-coding step requires $O(\sum u_i)$ time.

A closely related problem is Unbounded Subset Sum, where each input integer can be used arbitrarily many times. Bringmann~\cite{Bri17} showed an $\widetilde{O}(n+t)$-time algorithm, followed by an $\widetilde{O}(n+w)$-time algorithm by Jansen and Rohwedder~\cite{JR23}. It is worth mentioning that Unbounded Subset Sum also admits an $w_{\min}^2/2^{\Omega(\log w_{\min})^{1/2}}$-time algorithm by Klein \cite{Kle22}, where $w_{\min}$ refers to the smallest integer in the input.

Another variant of Subset Sum is the Modular Subset Sum problem, where all additions are modulo $m$, for some given integer $m$. The dynamic programming algorithm of Bellman~\cite{Bel57} applies to it in $O(nm)$ time. In recent years, Koiliaris and Xu~\cite{KX18} gave a $\widetilde{O}(\min(\sqrt{n}m,m^{5/4}))$ time algorithm for this problem, which was later improved to $O(m \log^7 m)$ randomized time~\cite{ABJ+19}. Axiotis et al.~\cite{ABB+} simplified their algorithm and presented a randomized algorithm running in time $O(m \log^2 m)$ and a deterministic algorithm running in time $O(m \mathrm{polylog} m)$. At the same time, Cardinal and Iacono~\cite{CI21} presented a randomized algorithm running in time $O(m\log m)$. The currently best-known deterministic algorithm is due to Pot\k{e}pa~\cite{Pot21}, which runs in time $O(m\log m\cdot \alpha(m))$, where $\alpha(m)$ is the inverse Ackerman function.

In addition to pseudo-polynomial time algorithms for Subset Sum, there is also a line of research on approximation algorithms~\cite{IK75, Kar75, KPS97, BN21b, MWW19, DJM23, WC22, CLMZ24cSTOCPartition}.

 Pseudo-polynomial time algorithms for Knapsack have also been extensively studied in the literature~\cite{Bel57, Pis99, Tam09, BHSS18, AT19, BHSS18, PRW21, BC22, BC23, CLMZ24aSODA, Jin24, Bri24, HX24}.

Our algorithm leverages sparse Fast Fourier Transform, see, e.g.,~\cite{CH02, AR15, CL15, Nak20, GGC20, BFN21, BN21a, BFN22}. 
We also use results from additive combinatorics~\cite{Alo87, Sar89, Sar94, SV05, SV06}. It is worth mentioning that a long line of research works used
these additive combinatorics results to design algorithms for dense cases of Subset Sum~\cite{Cha99, CFG89, GM91, BW21}. Notably, most of these works assume that the input is a set, and only
recently has it been generalized to multi-sets~\cite{BW21} and has been adopted to obtain faster algorithms for Subset Sum and Knapsack \cite{ MWW19, BN20, BW21, WC22, DJM23, CLMZ24aSODA, Jin24, Bri24}.

\subsection{Paper Outline}
In Section~\ref{sec:pre}, we introduce all the necessary terminology and known facts. In Section~\ref{sec:keylem}, we present the framework for proving our main theorem, where a lemma plays an important role. This key lemma will be proved in the next two sections. Section~\ref{sec:dense} tackles the dense part of the lemma using additive combinatorics tools, while Section~\ref{sec:sparse} tackles the sparse part using color-coding, random permutation and sparse convolution. %Section~\ref{sec:recover} presents an approach to recovering a subset that sums to $t$ if such a subset exists.

\section{Preliminaries}\label{sec:pre}
Throughout the paper, we distinguish multi-sets from sets. Only when we explicitly use the term multi-sets do we allow duplicate elements. The subset of a multi-set is always a multi-set unless otherwise stated. All logarithms in this paper are base 2. When we say ``with high probability'', we mean with probability at least $1 - (n+t)^{-\Omega(1)}$, where $n$ is the size of $X$ and $t$ is the target.

\subsection{Notation}
Let $Z$ be a multi-set of non-negative integers. We use $\max(Z)$ and $\min(Z)$ to denote the maximum and minimum elements of $Z$.  The diameter of $Z$ is defined as
\(
    \dm(Z) = \max(Z) - \min(Z)+1.
\) 
We write $\sum_{z \in Z}z$ as $\sigma(Z)$. For emptyset, we define $\max(\emptyset) = \min(\emptyset) = \sigma(\emptyset) = 0$. We define $\mathcal{S}(Z)$ to be the set of all subset sums of $Z$. That is, 
\(
    \mathcal{S}(Z) = \{\sigma(Y) : Y \subseteq Z\}.
\)

Given any two integers $a$ and $b$, we use $[a,b]$ to denote the set of integers between $a$ and $b$, i.e., $[a,b] = \{z \in\mathbb{Z}: a\leq z\leq b\}$.

\subsection{Sumsets and Convolution}
Let $A$ and $B$ be two sets of integers. Their sumset $A+B$ is defined as follows.
    \[
        A + B = \left\{a + b : a\in A, b \in B\right\}
    \]
It is well-known that computing sumsets is equivalent to computing the convolution of two vectors, which can be solved by Fast Fourier Transformation.
\begin{lemma}\label{lem:fft}
    Let $A$ and $B$ be two sets of integers. We can compute their sumset $A + B$ in $O(u\log u)$ time where $u = \max\{\dm(A), \dm(B), 2\}$.
\end{lemma}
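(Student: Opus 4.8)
The plan is to reduce the computation of $A+B$ to a single polynomial multiplication and then invoke Fast Fourier Transform. First I would dispose of the degenerate cases: if $A=\emptyset$ or $B=\emptyset$, then $A+B=\emptyset$, which we return in $O(1)$ time (note $u\geq 2$ here, so this is within budget), so from now on assume both sets are nonempty.

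Next I would normalize the two sets so that they sit inside intervals starting at $0$. Let $\alpha=\min(A)$ and $\beta=\min(B)$, and put $A'=A-\alpha$ and $B'=B-\beta$. Then $A'\subseteq[0,\dm(A)-1]$ and $B'\subseteq[0,\dm(B)-1]$, and since $A+B=(A'+B')+(\alpha+\beta)$ it suffices to compute $A'+B'$ and then shift every element by the constant $\alpha+\beta$, which costs $O(|A'+B'|)=O(u)$ time once $A'+B'$ is available.

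Then I would encode $A'$ and $B'$ as $0/1$ coefficient vectors, i.e.\ consider the polynomials $f(x)=\sum_{a\in A'}x^a$ and $g(x)=\sum_{b\in B'}x^b$, of degrees at most $\dm(A)-1<u$ and $\dm(B)-1<u$ respectively. In the product $h(x)=f(x)g(x)$, the coefficient of $x^c$ is the number of ways to write $c=a+b$ with $a\in A'$, $b\in B'$; hence this coefficient is nonzero exactly when $c\in A'+B'$, so the support of $h$ is precisely $A'+B'$. Since $h$ has degree less than $2u$, computing it by FFT takes $O(u\log u)$ time, and reading off its support takes a further $O(u)$ time; combined with the $O(u)$ shift from the previous paragraph this gives the claimed $O(u\log u)$ bound.

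The one point that deserves a word of care is numerical precision. The coefficients of $h$ are nonnegative integers bounded by $\min(|A'|,|B'|)\leq u$, so carrying out the three transforms with $O(\log u)$ bits of precision and rounding each output coefficient to the nearest integer recovers $h$ exactly; alternatively one may perform the transform modulo a prime of size polynomial in $u$ and avoid floating point altogether. I do not anticipate any real obstacle: this lemma is the standard reduction of sumset computation to FFT-based polynomial multiplication, and the only things to keep straight are the bookkeeping of the additive shift $\alpha+\beta$ and the precision argument just sketched.
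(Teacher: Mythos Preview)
Your proposal is correct and is precisely the standard reduction the paper has in mind; the paper does not give a proof but simply remarks that computing a sumset is equivalent to convolution via FFT, and your write-up fills in exactly those details (shift to $[0,u-1]$, encode as $0/1$ polynomials, multiply in $O(u\log u)$ via FFT, read off the support). There is nothing to add or correct.
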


When $|A+B|$ is small, sparse convolution algorithms, whose running time depends on the output size, are much faster than the classic algorithm. We use the best deterministic algorithm, which is due to Bringmann, Fischer, and Nakos~\cite{BFN22}. This is to ease the analysis. One may replace it with a faster randomized algorithm (e.g.,\cite{CH02,Nak20,GGC20,BFN21}) to reduce the logarithmic factors in the running time.

\begin{lemma}[\normalfont \cite{BFN22}]\label{lem:sparse-fft}
    Let $A$ and $B$ be two sets of integers. We can compute their sumset $A + B$ in $O(|A + B| \mathrm{polylog}\,u)$ time where $u = \max\{\dm(A), \dm(B), 2\}$.
\end{lemma}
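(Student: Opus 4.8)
The plan is to reduce output‑sensitive sumset computation to a single dense FFT performed on a universe shrunk to size $\Theta(|A+B|)$. First I would normalize by subtracting $\min(A)$ and $\min(B)$ from the elements of $A$ and $B$ respectively, so both sets lie in $[0,u]$ with $u=\max\{\dm(A),\dm(B),2\}$; the true sumset is then recovered by adding back $\min(A)+\min(B)$, an $O(|A+B|)$ post‑processing step. Since the target running time depends on the unknown quantity $k:=|A+B|$, I would wrap everything in a geometric search: guess a budget $k'=2,4,8,\dots$, run the main routine below with that budget, and stop at the first $k'$ for which the routine certifies its output as correct. Because the running time with budget $k'$ is $O(k'\,\polylog u)$ and the budgets double, the total cost telescopes to $O(k\,\polylog u)$, so it suffices to describe the main routine and equip it with a cheap correctness certificate.

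The heart of the routine is an \emph{almost linear} hash of the universe $[0,u]$ into $[0,m)$ with $m=\Theta(k')$. For a prime $p>u$ and a dilation $\sigma\in[1,p)$, set $h_\sigma(x)=\lfloor (\sigma x \bmod p)\,m/p\rfloor$; such a map satisfies $h_\sigma(x)+h_\sigma(y)=h_\sigma(x+y)+O(1)$ up to mod‑$m$ wraparound, so the number of buckets into which a fixed output element can fall is constant. Forming the length‑$m$ vectors $P(z)=\sum_{x\in A}z^{h_\sigma(x)}$, $\widehat P(z)=\sum_{x\in A}x\,z^{h_\sigma(x)}$ and the analogous $Q,\widehat Q$ for $B$, a single dense FFT of length $\widetilde O(m)$ computes $PQ$, $\widehat P Q$ and $P\widehat Q$. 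In every bucket that collects a \emph{unique} pair $(x,y)\in A\times B$, the coefficient of $PQ$ equals $1$ and the ratio of the coefficient of $\widehat P Q+P\widehat Q$ to it equals $x+y$, so that output element is read off exactly; this standard sparse‑recovery read‑off only ever emits genuine sums, which serves as the correctness certificate needed above.

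Finally I would show that a short, explicitly constructed list of dilations $\sigma_1,\dots,\sigma_L$ with $L=O(\polylog u)$ isolates every element of $A+B$ in at least one round, so that the union of the read‑off sets equals $A+B$; running the read‑off for all $\sigma_i$ then costs $O(L\cdot m\,\polylog u)=O(k'\,\polylog u)$. The argument is that the number of pairs $(x,y),(x',y')$ with $x+y=x'+y'$ that could collide in a common bucket is bounded, and a counting argument over the $\sigma_i$ shows no output element fails to be isolated in all $L$ rounds. The main obstacle is precisely this isolation guarantee: with random dilations it is routine, but making it \emph{deterministic} while keeping both the number of rounds and the per‑round universe size within $O(k\,\polylog u)$ forces the careful number‑theoretic choice of moduli and dilations (and the associated collision‑counting) that constitutes the technical core of~\cite{BFN22}.
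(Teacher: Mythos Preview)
The paper does not prove this lemma at all: it is stated as a black-box import from \cite{BFN22} (``We use the best deterministic algorithm, which is due to Bringmann, Fischer, and Nakos''), with no accompanying argument. So there is nothing in the paper to compare your proposal against; you are sketching the content of \cite{BFN22} itself, not reproducing a proof the present paper gives.

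As a sketch of the underlying sparse-convolution machinery your outline is in the right spirit (normalize, geometrically guess the output size, hash almost-linearly into a bucket array of size $\Theta(k')$, do a dense FFT there, and read off isolated sums), and you correctly flag the deterministic isolation over $\polylog u$ dilations as the technical crux. Two points are underspecified, though. First, your ``correctness certificate'' is not one: the read-off emitting only genuine sums guarantees no false positives, but gives no signal that you have found \emph{all} of $A+B$; to terminate the geometric search you need an actual verifier (e.g.\ a hashed comparison of the recovered set against $A+B$, or a guarantee that once $k'\ge k$ every output element is isolated in some round so the recovered set has size exactly $k$). Second, the bucket $h_\sigma(x)+h_\sigma(y)$ can receive many $(x,y)$ pairs that all have the \emph{same} sum $x+y$; in that case the $PQ$ coefficient exceeds $1$ yet the element is perfectly recoverable, so ``coefficient equals $1$'' is too strict a read-off rule and would miss such sums. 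Both issues are handled in \cite{BFN22}, but your sketch would need to address them to stand as a proof.
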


\subsection{Subset Sum} 

Let $(X,t)$ be a Subset Sum instance. We always assume that $t \leq \sigma(X)/2$.  This is without loss of generality: when $t > \sigma(X)/2$, we can let $t' = \sigma(X) - t$ and solve the instance $(X, t')$. Let $w$ be the maximum element of $X$. 
We assume that $n, t, w$ are greater than any constant appearing in this paper, since otherwise the instance can be solved in $O(n)$ time by known algorithms.  We further assume that $t \geq 10\sqrt{wt}\log w$, since otherwise, the instance can be solved in time $\widetilde{O}(n + \sqrt{wt})$ by the $\widetilde{O}(n + t)$-time algorithm~\cite{Bri17}.

\section{Framework for Proving the Main Theorem}\label{sec:keylem}

We first define some notation that is only used in this section. Let $Z$ be a multi-set and $d$ be an integer. We define the multi-set $dZ = \{dz : z\in Z\}$ and the set $Z \bmod d = \{z \bmod d : z \in Z\}$. We define $Z(d)$ to be the multi-set of all integers in $Z$ that are divisible by $d$ and $\overline{Z(d)} = Z\setminus Z(d)$ to be the multi-set of all integers in $Z$ that are not divisible by $d$. If every integer in $Z$ is divisible by $d$, we define the multi-set $Z/d = \{z/d : z\in Z\}$. 

As in many previous works, we reduce Subset Sum to a problem of computing a subset of $\mathcal{S}(X)$. (Recall that $\mathcal{S}(X)$ is the set of all subset sums of $X$.) More precisely, we compute a set $S \subseteq \mathcal{S}(X)$ of size $\widetilde{O}(\sqrt{wt})$ such that $S$ contains $t$ with high probability if $t \in \mathcal{S}(X)$. This allows us to solve the instance $(X, t)$ in $\widetilde{O}(\sqrt{wt})$ time by checking whether $t \in S$.

Our approach builds upon the framework proposed for Dense Subset Sum ~\cite{GM91,BW21}. 
We roughly explain the idea. If $X$ does not contain an almost divisor, that is, an integer $d$ that divides almost all elements of $X$, then we can find a small subset $R$ such that for any possible common difference $d'$, $\mathcal{S}(R)\bmod d'$ covers all possible values. In this case, we show that if $\mathcal{S}(X\setminus R)$ contains an arithmetic progression, then $\mathcal{S}(X)$ must contain an interval that includes $t$.
On the other hand, if $X$ does have an almost divisor $d$, we can extract a small subset $G$ such that all elements in $X\setminus G$ are divisible by $d$ and $(X \setminus G)/d$ does not have any almost divisor. This brings us back to the previous case. Finally, we reconstruct the full set via convolution $\mathcal{S}(X)=\mathcal{S}(G)+d\mathcal{S}((X\backslash G)/d)$.

We begin by presenting three lemmas that form the basis for the proof of our main theorem. The first and third lemmas are slightly modified from Dense Subset Sum~\cite{BW21}, which are proved in the next two subsections, respectively. The second lemma serves as our main technical contribution, where the remaining part of the paper is devoted to its proof.

The first lemma formally defines the partition of $X$ into $G$, $R$, and $D=X\setminus(G\cup R)$.
\begin{restatable}{lemma}{lemdensesubsetsum}
    \label{lem:dense-subset-sum}
    Given a multi-set $X$ of $n$ positive integers from $[1, w]$ and a positive integer $t$, in time $\widetilde{O}(n + \sqrt{w})$, we can compute an integer $d \geq 1$ and a partition $G\cup R \cup D$ of $X$ such that the following holds.
    \begin{enumerate}[label={\normalfont (\roman*)}] 
        \item $\sigma(G) \leq \sqrt{wt} \log w$ and $\sigma(R) \leq 4\sqrt{wt}\log w$.
        \item Every integer in $R \cup D$ is divisible by $d$.
        \item $\mathcal{S}({R/d}) \bmod b = [0, b-1]$ for any $b \in [1, \sqrt{t/w}]$. That is, for any $b \in [1, \sqrt{t/w}]$, the multi-set $R/d$ can generate all the remainders modulo $b$.
    \end{enumerate}
\end{restatable}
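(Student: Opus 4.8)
The plan is to assemble the statement from two ingredients that are essentially present in the dense subset sum work of Bringmann and Wellnitz~\cite{BW21}: a \emph{divisor‑extraction} step that produces $d$ together with the ``garbage'' set $G$, and a \emph{residue‑generation} step that produces $R$ inside the $d$‑divisible part of $X$. For the divisor extraction, I would, for each prime $p$, let $e_p$ be the largest exponent $e\geq 0$ such that the elements of $X$ not divisible by $p^{e}$ have total size at most $\sqrt{wt}$, and set $d=\prod_p p^{e_p}$ and $G=\{x\in X : d\nmid x\}$. Since $X\setminus G$ is nonempty (as $\sigma(G)<\sigma(X)$ under the standing assumptions), $d$ divides some element $\leq w$, so $d\leq w$ and at most $\log w$ primes have $e_p\geq 1$; because $G$ is the union over exactly these primes of the sets $\{x:p^{e_p}\nmid x\}$, each of total size $\leq\sqrt{wt}$, we get $\sigma(G)\leq\sqrt{wt}\log w$, which is property~(i) for $G$ (the bound on $\sigma(R)$ comes later). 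Property~(ii) is immediate, since $R\cup D\subseteq X\setminus G$ and every element of $X\setminus G$ is divisible by $d$.

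For the residue generation, first reduce to prime‑power moduli: if $\S(R/d)\bmod q=[0,q-1]$ for every prime power $q\leq\sqrt{t/w}$, then by the Chinese Remainder Theorem the same holds for every $b\in[1,\sqrt{t/w}]$, giving property~(iii). Fix $q=p^{k}\leq\sqrt{t/w}$. Using the Cauchy–Davenport inequality on $\{0,y_1\}+\{0,y_2\}+\cdots$ to cover $\mathbb{Z}/p\mathbb{Z}$, followed by a standard level‑by‑level lifting argument to pass from $\mathbb{Z}/p\mathbb{Z}$ to $\mathbb{Z}/p^{k}\mathbb{Z}$, one shows that $O(k\,p)$ elements of $Y:=(X\setminus G)/d$ coprime to $p$ suffice for their subset sums to cover all of $\mathbb{Z}/q\mathbb{Z}$. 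The maximality of $e_p$ guarantees an abundance property of $Y$ — for every prime $p\leq\sqrt{t/w}$, $Y$ contains many elements coprime to $p$ — and since $k\leq\log\sqrt{t/w}$ and $p\leq\sqrt{t/w}$, one may take $R/d$ to be the union over all prime powers $q\leq\sqrt{t/w}$ of such a witness set, always selecting the smallest available witnesses. The witnesses for different primes overlap, so $|R/d|=\widetilde{O}(\sqrt{t/w})$; as every element of $Y$ is at most $w/d$, this yields $\sigma(R)=d\cdot\sigma(R/d)=\widetilde{O}(\sqrt{wt})$, and a more careful count (invoking the standing assumption $t\geq 10\sqrt{wt}\log w$) keeps this below $4\sqrt{wt}\log w$. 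Setting $D=(X\setminus G)\setminus R$ completes the partition.

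For the running time, to find $d$ I would sample $O(\log n)$ random elements of $X$ and let $g$ be their $\gcd$; with high probability $g$ coincides with the $\gcd$ of all but a controlled number of elements, after which each exponent $e_p$ for $p\mid g$ (there are $O(\log w)$ such primes) is found by binary search, each feasibility check costing $O(n)$ time. Factoring the candidate $g\leq w$ by trial division costs $O(\sqrt{w})$, and the only other prime powers inspected are $\leq\sqrt{t/w}\leq\sqrt{n}$; assembling $G$, $R$, and $D$ is then a single pass over $X$, so the total is $\widetilde{O}(n+\sqrt{w})$.

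The main obstacle is the abundance claim used in the residue‑generation step: after the exceptional elements have been moved into $G$, one must still guarantee that $Y$ contains enough elements coprime to \emph{every} prime $p\leq\sqrt{t/w}$ — in particular $\Theta(\sqrt{t/w})$ of them when $p$ itself is close to $\sqrt{t/w}$ — simultaneously for all such $p$, while keeping $\sigma(G)$ within its $\sqrt{wt}\log w$ budget. Reconciling the $\Theta(\log w)$ primes that may enter $d$ with this requirement, so that neither $\sigma(G)$ nor $\sigma(R)$ overshoots its $\widetilde{\Theta}(\sqrt{wt})$ cap, is exactly the content of the relevant estimates in~\cite{BW21} and is where the argument is least routine; the rest is bookkeeping.
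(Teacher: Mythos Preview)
Your two-step plan---extract a divisor $d$ together with a garbage set $G$, then extract a residue-covering set $R$ from $(X\setminus G)/d$---matches the paper's structure. However, your concrete constructions differ from the paper's in ways that leave real gaps, and deferring the ``main obstacle'' to~\cite{BW21} does not close them, because~\cite{BW21} (and the paper) use a different construction to which those estimates apply.

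For the divisor step, you set $e_p$ per prime via a \emph{sum} threshold and take $d=\prod_p p^{e_p}$. The paper instead works with the \emph{cardinality}-based notion of an $\alpha$-almost divisor (with $\alpha=\sqrt{t/w}$) and \emph{iteratively} divides out almost divisors until none remain. That iterative scheme guarantees by construction that $Y=(X\setminus G)/d$ has no $\alpha$-almost divisor, i.e., for every $b>1$ there are more than $\alpha$ elements of $Y$ not divisible by $b$---precisely the abundance you need. Your per-prime definition does not yield this: maximality of $e_p$ only gives $\sigma(\{x:p^{e_p+1}\nmid x\})>\sqrt{wt}$, and after subtracting the $\leq\sqrt{wt}$ contribution of $\{x:p^{e_p}\nmid x\}$ and the possible loss to $G$ caused by \emph{other} primes, nothing quantitative survives. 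Separately, your argument that $d\leq w$ is circular: you deduce it from $X\setminus G\neq\emptyset$, which you justify via $\sigma(G)\leq\sqrt{wt}\log w$, which in turn needs at most $\log w$ primes with $e_p\geq 1$---exactly what $d\leq w$ was meant to establish.

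For the residue step, Cauchy--Davenport plus lifting plus CRT is a valid route in principle, but your size bound $|R/d|=\widetilde{O}(\sqrt{t/w})$ rests on the unsupported claim that ``witnesses for different primes overlap.'' Summing $O(kp)$ over all prime powers $p^k\leq\sqrt{t/w}$ naively gives $\widetilde{\Theta}(t/w)$, not $\widetilde{\Theta}(\sqrt{t/w})$. The paper sidesteps this with a different trick from~\cite{BW21}: take an arbitrary subset of size $2\alpha$, observe that at most $O(\log w)$ primes $p\leq\alpha$ can have fewer than $\alpha$ non-multiples in it, and patch only those few primes with $\alpha$ extra elements each. This gives $|R/d|\leq 4\alpha\log w$ directly, and a short lemma (if $|\overline{R(b)}|\geq b$ for every $b\leq\alpha$ then $\S(R)\bmod b=[0,b-1]$) then yields property~(iii) without CRT or lifting.
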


Both $\sigma(G)$ and $\sigma(R)$ are bounded by $\widetilde{O}(\sqrt{wt})$. We can easily deal with $\mathcal{S}(G)$ and $\mathcal{S}(R)$ using known algorithms. As for $D$, if some subset of $X$ sums to $t$, we know that $D$ contributes at most $t$ and at least $t - \sigma(G)-\sigma(R) \geq t - 5\sqrt{wt}\log w$. Therefore, it suffices to compute $\mathcal{S}(D) \cap [t - 5\sqrt{wt}\log w, t]$.  The following lemma is the key to our result. It states that, in $\widetilde{O}(n + \sqrt{wt})$ time, we can either deal with $\mathcal{S}(D) \cap [t - 5\sqrt{wt}\log w, t]$ or show that some small subset of $D$ can generate a long arithmetic progression. 

\begin{restatable}{lemma}{lemkey}\label{lem:key}
    Let $D$ be a multi-set of at most $n$ positive integers from $[1, w]$. Let $t$ be a positive integer. In $\widetilde{O}(n + \sqrt{wt})$ time, we can obtain one of the following results.
    \begin{itemize}
        \item{Sparse case:} we can obtain a set $S \subseteq \mathcal{S}(D) \cap [t - 5\sqrt{wt}\log w, t]$ containing any $s \in \mathcal{S}(D) \cap [t - 5\sqrt{wt}\log w, t]$ with probability $1 - (n + t)^{-\Omega(1)}$, or
    
        \item {Dense case:} we can show that the following holds for some subset $P \subseteq D$.
        \begin{enumerate}[label={\normalfont (\roman*)}]
            \item The set $\mathcal{S}(P)$ contains an arithmetic progression $(a_1, \ldots, a_k)$ with $k \geq 5\sqrt{wt}\log w$ and $a_k \leq t/2$.

            \item $\sigma(D) - \sigma(P) \geq t$.
        \end{enumerate}
    \end{itemize}
\end{restatable}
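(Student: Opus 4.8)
The plan is to run the Dense-or-Sparse Framework (Algorithm~\ref{alg:sketch}) on $D$, but with two crucial modifications dictated by the technical overview: a reduced threshold for ``dense'' based on random permutation, and a careful exit when the dense case is detected. First I would randomly permute the integers of $D$ and place the permuted elements at the leaves $S^0_i = \{0, x_i\}$ of a balanced binary tree of depth $\log n$. The key probabilistic fact (via a Hoeffding/Bernstein-type bound, exactly as in the recent knapsack papers cited) is that for any node covering a contiguous block of $m$ leaves, the ``useful'' part of its subset-sum set $S^j_i$ lies within an interval of length $\widetilde{O}(\sqrt{wt})$ around its expected contribution $\frac{m}{n}\sigma(D)$ to the target; integers outside that window cannot be completed to a subset summing into $[t-5\sqrt{wt}\log w, t]$ and can be discarded. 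Thus at every level I only ever store, for each node, the intersection of $S^j_i$ with its length-$\Theta(\sqrt{wt})$ window, so the running time to build one level via sparse convolution (Lemma~\ref{lem:sparse-fft}) is governed by the \emph{window length} $\widetilde{O}(\sqrt{wt})$ rather than by $nw$.

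Next, at each level $j$ I would compute $u$, the largest integer appearing in the (windowed) sets at that level, and compare $\sum_i |S^j_i|$ against a density threshold of order $u$ (more precisely the threshold from the Szemer\'edi--Vu bound in Appendix~\ref{app:ap}). If every level through $j = \log n$ is sparse, then the root set $S^{\log n}_1$, intersected with $[t-5\sqrt{wt}\log w, t]$, is exactly the set $S$ required in the sparse case: it is a subset of $\S(D)$ by construction, and by the trimming argument it contains every $s\in\S(D)\cap[t-5\sqrt{wt}\log w,t]$ with the claimed probability $1 - (n+t)^{-\Omega(1)}$ (the failure probability comes from the Chernoff bounds over the $O(n)$ tree nodes, union-bounded). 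Here I must be slightly careful that trimming is ``lossless'' for the target: an element of $S^{\log n}_1$ landing in the target window can only have been assembled from elements of each child's set that themselves lie in the corresponding (larger) windows, so the windows must be defined with enough slack at each level — a nested family of intervals whose widths grow geometrically down the tree but all remain $\widetilde{O}(\sqrt{wt})$.

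If instead some level $j$ is the first to be dense, I stop and output the dense case. Let $P$ be the multi-set of all leaves lying under the first $2^{\,\log n - j} \cdot (\text{a carefully chosen number of nodes})$ — concretely, $P$ is the union of the leaf-blocks of however many nodes of level $j$ are needed so that $\sum|S^j_i|$ over those nodes already exceeds the density threshold, but no more than that; this guarantees $\sigma(P)$ is not too large. For part~(i): the sets $S^j_i$ over those nodes are subsets of $\S$ of the corresponding leaf-blocks, their total size beats the Szemer\'edi--Vu threshold relative to $u \le $ (the window bound) $= \widetilde{O}(\sqrt{wt})$, so their sumset — which is contained in $\S(P)$ — contains an arithmetic progression of length $\ge 5\sqrt{wt}\log w$, and because each $S^j_i$ sits in a window of size $\widetilde{O}(\sqrt{wt})$ around its mean, the whole progression sits below $t/2$ (this is where the slack in ``$a_k \le t/2$'' versus the window sizes has to be checked, using $t \ge 10\sqrt{wt}\log w$). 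For part~(ii): since we took only \emph{as many} level-$j$ nodes as needed to cross the threshold, and each contributes roughly $\frac{\text{block size}}{n}\sigma(D) + \widetilde{O}(\sqrt{wt})$, the total $\sigma(P)$ is at most some small fraction of $\sigma(D)$ plus lower-order terms, so $\sigma(D) - \sigma(P) \ge t$ follows from $t \le \sigma(D)/2$ with room to spare. The ``evidence'' we output is just $j$ together with the list of chosen level-$j$ nodes — we do not (and cannot, non-constructively) exhibit the progression itself.

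The main obstacle I anticipate is getting the \textbf{window bookkeeping} consistent across all three demands simultaneously: the windows must be (a) wide enough that trimming never removes a leaf-combination that completes to the target window (lossless-ness), (b) narrow enough — $\widetilde{O}(\sqrt{wt})$ — that building each level by sparse convolution stays within budget and that the dense-case arithmetic progression ends below $t/2$, and (c) defined using only the random permutation and concentration, since the actual contribution of a subtree is not known in advance. Reconciling (a) and (b) forces the nested-interval widths to grow by a constant factor per level, costing only an $O(\log n)$ blowup, but the concentration bound must be strong enough ($\exp(-\Omega(\log^2(n+t)))$ after choosing the right deviation) to survive the union bound over all $O(n)$ nodes; threading that through while keeping every interval $\widetilde{O}(\sqrt{wt})$ is the delicate part. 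A secondary subtlety is that the ``first dense level'' argument needs the number of selected nodes to be chosen so that $\sigma(P)$ is controlled — picking \emph{all} of level $j$ would be too much — which requires that the density threshold be crossed already by a strict sub-collection, a property one gets by ordering the nodes and taking a prefix.
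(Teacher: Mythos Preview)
Your sparse case sketch is essentially right: random permutation plus Bernstein does give windows of width $\widetilde{O}(\sqrt{wt})$ for the contribution of any fixed $Z$ with $\sigma(Z)\le t$, and with that trimming the tree-style sparse convolution stays within budget. This is exactly what the paper does in its Phase~Three.

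The gap is in the dense case, and it is a real one. Your plan is to take a \emph{prefix} of level-$j$ nodes until the partial total size crosses the Szemer\'edi--Vu threshold, declare $P$ to be the leaves under those nodes, and argue that $\sigma(P)$ is a small fraction of $\sigma(D)$ while $a_k\le t/2$. Neither conclusion follows. If the sizes $|S^j_i|$ are spread evenly, the prefix you need may be essentially all of level $j$, so $\sigma(P)\approx\sigma(D)$ and (ii) fails. For (i), $a_k$ lives in $\sum_{i\in I}S^j_i$, hence $a_k\le \sum_{i\in I}\max(S^j_i)\le |I|\bigl(\tfrac{2^j}{n}t+\eta\bigr)$; with $|I|$ close to $\ell_j=n/2^j$ the first term alone is already $\approx t$, not $t/2$. (Incidentally, your claim that each block contributes $\tfrac{m}{n}\sigma(D)+\widetilde{O}(\sqrt{wt})$ has the wrong deviation: Bernstein on $\sigma(D^j_i)$ gives $\widetilde{O}(\sqrt{w\,\sigma(D)})$, which is much larger than $\sqrt{wt}$ when $\sigma(D)\gg t$.)

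One can try to rescue this by replacing the prefix rule with the weighted selection of Lemma~\ref{lem:ap-with-small-element}: choose a weight $f$ with $\max(S^j_i)\le f(S^j_i)\le \sigma(D^j_i)$, and take $\rho$ large enough that $\tfrac{1}{\rho}\sum_i f(S^j_i)\le t/2$. But with raw leaves $S^0_i=\{0,x_i\}$ the only workable $f$ has $\sum_i f(S^0_i)=\Theta(\sigma(D))$, forcing $\rho=\Theta(\sigma(D)/t)$; the density threshold then becomes $\widetilde{\Theta}(\rho\sqrt{wt})=\widetilde{\Theta}(\sigma(D)\sqrt{w/t})$, which can far exceed $n+\sqrt{wt}$ (take $\sigma(D)\approx nw$ and $t\ll w^3$). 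This is exactly the regime $t\ll\sigma(D)$ that the paper flags in the overview as requiring an additional idea.

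The missing ingredient is Bringmann's two-stage \emph{color coding}, which the paper runs before the random-permutation merge (Phases~One and~Two, Lemma~\ref{lem:ult-color-coding}). Color coding produces groups $D_i$ and sets $S_i\subseteq\S(D_i)$ with the crucial property $\sum_i\max(S_i)=\widetilde{O}(t)$ regardless of how large $\sigma(D)$ is (Lemma~\ref{lem:ult-color-coding}(iii)(iv)). With that bound in hand one can take $\rho=\widetilde{O}(1)$, keep the density threshold at $\widetilde{O}(\sqrt{wt})$, and define $f$ recursively (Lemma~\ref{lem:permute-and-conv-dense}) so that Definition~\ref{def:dense} is met; Lemma~\ref{lem:good-ap} then yields both $a_k\le t/2$ and $\sigma(D)-\sigma(P)\ge t$. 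Your outline would be salvageable once you insert this color-coding layer between the leaves and the random-permutation tree, but as written the dense-case argument does not go through.
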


If Lemma~\ref{lem:key} ends with the sparse case, we can solve the Subset Sum instance easily via convolution. If it ends with the dense case, however, we need extra tools from Dense Subset Sum~\cite{BW21} to show that $\mathcal{S}(R \cup D)$ contains all the multiples of $d$ within $[t - \sqrt{wt}\log w, t]$. 

\begin{restatable}{lemma}{lemaugmentap}
\label{lem:augment-ap}
    Let $R$, $P$, and $Q$ be multi-sets of integers from $[1, w]$. Suppose that $\mathcal{S}(P)$ contains an arithmetic progression $(a_1, \ldots, a_k)$ with common difference $\Delta$ and that $\mathcal{S}(R) \bmod \Delta = [0, \Delta - 1]$. If $k \geq \sigma(R) + w + 1$, then $\mathcal{S}(R \cup P \cup Q)$ contains every integer in $[\sigma(R) + a_k, \sigma(Q)]$. In other words,
    \[
        [\sigma(R) + a_k, \sigma(Q)] \subseteq \mathcal{S}(R \cup P \cup Q).
    \]
\end{restatable}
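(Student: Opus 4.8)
The plan is to prove the stronger, purely local statement that the ``window'' $W := [\,a_1 + \sigma(R),\ a_k\,]$ is entirely contained in $\S(R\cup P)$, and then to sweep $Q$ across this window to reach any prescribed $m$ in the claimed range. (If the interval $[\sigma(R)+a_k,\sigma(Q)]$ is empty there is nothing to prove, so assume it is not and fix $m$ in it.)

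First I would establish the window claim $W\subseteq \S(R\cup P)$ together with the size bound $|W|\geq w+1$. The size bound is immediate: $|W| = a_k - a_1 - \sigma(R) + 1 = (k-1)\Delta - \sigma(R) + 1 \geq (k-1) - \sigma(R) + 1 \geq w+1$, using $\Delta\geq 1$ and $k\geq \sigma(R)+w+1$. For the containment, fix $v\in W$ and set $c = (v-a_1)\bmod\Delta$. Since $\S(R)\bmod\Delta = [0,\Delta-1]$ and $\S(R)\subseteq[0,\sigma(R)]$, there is $r\in\S(R)$ with $r\equiv c\pmod\Delta$ and $0\leq r\leq\sigma(R)$. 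Put $a := v-r$. Then $a \equiv v - (v-a_1) \equiv a_1 \pmod\Delta$, and $a_1 \leq v-\sigma(R)\leq a \leq v \leq a_k$, so $a$ is one of $a_1, a_1+\Delta,\dots,a_k$ and hence $a\in\S(P)$. Therefore $v = r+a \in \S(R)+\S(P)\subseteq \S(R\cup P)$, the last inclusion because the multiset union of a sub-multiset of $R$ and a sub-multiset of $P$ is a sub-multiset of $R\cup P$ whose sum is the sum of the two parts.

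Next I would sweep $Q$. Order the elements of $Q$ as $q_1,\dots,q_{|Q|}$ and set $Q_0 = 0$ and $Q_j = q_1+\cdots+q_j$, so that $Q_0 < Q_1 < \cdots < Q_{|Q|} = \sigma(Q)$ with consecutive gaps $q_{j+1}\in[1,w]$. Consider the strictly decreasing integer sequence $m-Q_0, m-Q_1,\dots,m-Q_{|Q|}$. Its first term is $m-Q_0 = m\geq \sigma(R)+a_k \geq a_k = \max W$; its last term is $m-Q_{|Q|} = m-\sigma(Q)\leq 0\leq a_1+\sigma(R) = \min W$; and consecutive terms differ by at most $w\leq |W|-1$. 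Let $j$ be the smallest index with $m-Q_j\leq \max W$. If $j=0$ then $m-Q_0 = \max W\in W$; otherwise $m-Q_{j-1} > \max W$, so $m-Q_j \geq (m-Q_{j-1})-w > \max W - w \geq \min W$, and again $m-Q_j\in W$. In either case $m-Q_j\in W\subseteq\S(R\cup P)$, while $Q_j\in\S(Q)$, so $m = (m-Q_j)+Q_j \in \S(R\cup P)+\S(Q)\subseteq \S(R\cup P\cup Q)$, which is what we wanted.

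I expect the only real content to be the window claim $W\subseteq\S(R\cup P)$: this is precisely where both hypotheses are used — $\S(R)\bmod\Delta = [0,\Delta-1]$ to realize an arbitrary residue by some $r\in\S(R)\cap[0,\sigma(R)]$, and $k\geq\sigma(R)+w+1$ to guarantee that, after this residue correction, the value $a = v-r$ still lies within the arithmetic progression. The sweeping step is then a routine discrete intermediate-value argument; its only subtlety is that the per-step change (at most $w$) must be strictly smaller than the window length (at least $w+1$), which is exactly why the bound on $k$ carries a $+w+1$ rather than merely a $+1$.
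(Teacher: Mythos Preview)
Your proof is correct and follows essentially the same approach as the paper's. The paper fixes a target $y$, first chooses $Q'\subseteq Q$ so that $y-\sigma(Q')\in[a_k-w,\,a_k]\subseteq[a_1+\sigma(R),\,a_k]$, then chooses $R'\subseteq R$ to fix the residue mod $\Delta$, and finally reads off $P'$ from the arithmetic progression; you do the same three steps but in a different order, first isolating the reusable ``window'' claim $[a_1+\sigma(R),a_k]\subseteq\S(R\cup P)$ and then performing the $Q$-sweep. The underlying arithmetic is identical.
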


Now we are ready to prove our main theorem.
\thmmain*
\begin{proof}
    It suffices to show that, in $\widetilde{O}(n + \sqrt{wt})$ time, we can compute a set $S \subseteq \mathcal{S}(X)$ of size $\widetilde{O}(\sqrt{wt})$ such that $S$ contains $t$ with probability $1 - (n + t)^{-\Omega(1)}$ if $t \in \mathcal{S}(X)$. Recall that to simplify the notation, the term ``with high probability'' means ``with probability at least $1 - (n + t)^{-\Omega(1)}$'' throughout this proof.

    We first preprocess $X$ via Lemma~\ref{lem:dense-subset-sum}, and get an integer $d$ and a partition $G \cup R \cup D$ of $X$. Since $\sigma(G)$ is bounded by $\widetilde{O}(\sqrt{wt})$, in $\widetilde{O}(n + \sqrt{wt})$ time, we can obtain a set $S_G \subseteq \mathcal{S}(G)$ containing any $s \in \mathcal{S}(G)$ with high probability. This can be done by Bringmann's $\widetilde{O}(n+t)$-time algorithm~\cite{Bri17}.  Similarly, in $\widetilde{O}(n + \sqrt{wt})$ time, we can compute a set $S_R \subseteq \mathcal{S}(R)$ containing any $s \in \mathcal{S}(R)$ with high probability. Then we process $D$ via Lemma~\ref{lem:key}. There are two cases.

    Suppose the Lemma~\ref{lem:key} ends with the sparse case. That is, we obtain a set $S_D \subseteq \mathcal{S}(D) \cap [t - 5\sqrt{wt}\log w, t]$ containing any $s \in \mathcal{S}(D) \cap [t - 5\sqrt{wt}\log w, t]$ with high probability. Note that the diameters of $S_G$, $S_R$, and $S_D$ are all bounded by $\widetilde{O}(\sqrt{wt})$; we can compute $S = S_G + S_R + S_D$ via Lemma~\ref{lem:fft} in $\widetilde{O}(\sqrt{wt})$ time. Next, we show that $S$ contains $t$ with high probability if $t \in \mathcal{S}(X)$. Let $Z \subseteq X$ be some subset with $\sigma(Z) = t$. It is easy to see that $S_G$ and $S_R$ contain $\sigma(Z \cap G)$ and $\sigma(Z \cap R)$ with high probability, respectively. Consider $\sigma(Z \cap D)$. We have
    \[
        t \geq \sigma(Z\cap D) \geq \sigma(Z) - \sigma(G) - \sigma(R) \geq t - 5\sqrt{wt}\log w.
    \]
    Therefore, $S_D$ contains $\sigma(Z \cap D)$ with high probability. Then $S = S_G + S_R + S_D$ contains $\sigma(Z) = t$ with high probability.

    Suppose that Lemma~\ref{lem:key} ends with the dense case.  We claim that in this case, $\mathcal{S}(R \cup D)$ contains all the multiples of $d$ within $[t - \sqrt{wt}\log w, t]$. Assume that the claim holds. Since every integer in $R\cup D$ is divisible by $d$ (Lemma~\ref{lem:dense-subset-sum}(ii)), the claim implies that 
    \[
        \mathcal{S}(R \cup D)\cap [t - \sqrt{wt}\log w, t] = [t - \sqrt{wt}\log w, t] \cap d\mathbb{N}.
    \]
    Let $S_{RD} = [t - \sqrt{wt}\log w, t] \cap d\mathbb{N}$. Let $S = S_G + S_{RD}$. The set $S$ can be computed in $\widetilde{O}(\sqrt{wt})$ time since the diameters of $S_G$ and $S_{RD}$ are bounded by $\widetilde{O}(\sqrt{wt})$. Suppose that some subset $Z \subseteq X$ has $\sigma(Z) = t$. Then $S_G$ contains $\sigma(Z \cap G)$ with high probability. Moreover, $S_{RD}$ must contain $\sigma(Z \cap (R \cup D))$ since $t \geq \sigma(Z \cap (R \cup D)) \geq \sigma(Z) - \sigma(G) \geq t - \sqrt{wt}\log w$. Then we have that $S$ contains $t = \sigma(Z)$ with high probability.

    It remains to prove the claim that $\mathcal{S}(R \cup D)$ contains all the multiples of $d$ within $[t - \sqrt{wt}\log w, t]$ in the dense case. Recall that every integer in $R\cup D$ is divisible by $d$. Let $R' = R/d$ and $D' = D/d$. It suffices to show that $[(t - \sqrt{wt}\log w)/d, t/d] \subseteq \mathcal{S}(R' \cup D')$. Lemma~\ref{lem:key}(i) guarantees that $D'$ has a subset $P'$ such that $\mathcal{S}(P')$ contains an arithmetic progression $(a_1, \ldots, a_k)$ with $k \geq 5\sqrt{wt}\log w$ and $a_k \leq t/(2d)$. The common difference of the arithmetic progression is $\Delta < a_k/k < \sqrt{t/w}$. Lemma~\ref{lem:dense-subset-sum}(iii) guarantees that $\mathcal{S}(R') \bmod \Delta = [0, \Delta-1]$. One can verify that $R'$, $P'$, and $D'\setminus P'$ satisfy all the conditions of Lemma~\ref{lem:augment-ap}. Therefore, we have 
    \[
        [\sigma(R') + a_k, \sigma(D'\setminus P')] \subseteq \mathcal{S}(R' \cup D').
    \]
    Note that 
    \[
        \sigma(R') + a_k \leq \frac{4\sqrt{wt}\log w}{d} + \frac{t}{2d} \leq \frac{t - \sqrt{wt}\log w}{d}.
    \]
    The last inequality is due to our assumption that $t \geq 10\sqrt{wt}\log w$. Lemma~\ref{lem:key}(ii) guarantees that
    \[
        \sigma(D' \setminus P') = \frac{\sigma(D) - \sigma(P)}{d} \geq \frac{t}{d}.
    \]
    The above three inequalities imply that
    \[
        \left[\frac{t - \sqrt{wt}\log w}{d}, \frac{t}{d}\right] \subseteq \mathcal{S}(R' \cup D').
    \]
    This completes the proof.
\end{proof}

Section~\ref{sec:dense} and Section~\ref{sec:sparse} are devoted to proving the key lemma (Lemma~\ref{lem:key}). Recall that we assume $10\sqrt{wt}\log w\leq t \leq \sigma(X)/2$.
Then we have
\[
    \sigma(D) \geq \sigma(X) - \sigma(G) - \sigma(R) \geq \sigma(X) - 5\sqrt{wt}\log w \geq \frac{3t}{2}.
\]
In the rest of the paper, we always assume that $\sigma(D) \geq \frac{3t}{2}$. Before moving further, we complete this section by presenting the proofs for Lemmas ~\ref{lem:dense-subset-sum} and ~\ref{lem:augment-ap}.

\subsection{Proof of Lemma~\ref{lem:dense-subset-sum}}

We first formally define almost divisors and show that such the subset $G$ can be efficiently computed.

\begin{definition}
    Let $X$ be a multi-set of positive integers. We say an integer $d > 1$ is an $\alpha$-almost divisor of $X$ if $|\overline{X(d)}| \leq \alpha$. 
\end{definition}

\begin{lemma}[implied by {\cite[Theorem 4.1]{BW21}}]
\label{lem:part-almost-div}
    Given $\alpha > 0$ and a multi-set $X$ of $n$ positive integers from $[1, w]$, in time $\widetilde{O}(n + \sqrt{w})$ time, we can compute a divisor $d \geq 1$ such that the following holds.
    \begin{enumerate}[label={\normalfont (\roman*)}]
        \item $X(d)/d$ has no $\alpha$-almost divisor.

        \item $|\overline{X(d)}| \leq \alpha \log w$.
    \end{enumerate}
\end{lemma}
\begin{proof}
We can check whether $X$ has an $\alpha$-almost divisor in $\widetilde{O}(n + \sqrt{w})$ time via the following lemma from \cite{BW21}. 

\begin{lemma}[{\cite[Theorem 4.12]{BW21}}]\label{lem:find-almost-divisor}
    Given $\alpha > 0$ and a multi-set $X$ of $n$ positive integers from $[1, w]$, in $\widetilde{O}(n + \sqrt{w})$ time, we can decide whether $X$ has an $\alpha$-almost divisor, and compute an $\alpha$-almost divisor if it exists.
\end{lemma}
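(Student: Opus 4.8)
The plan is to reduce the search for an $\alpha$-almost divisor to a search over primes, which form a small enough candidate set once all prime factorizations are known. First dispose of a degenerate case: if $\alpha \geq n$ then $|\overline{X(d)}| \leq n \leq \alpha$ for every $d > 1$, so we may return $d = 2$. Hence assume $\alpha < n$; then any $\alpha$-almost divisor $d$ satisfies $|X(d)| = n - |\overline{X(d)}| \geq n - \alpha \geq 1$, so $d$ divides at least one element of $X$ and in particular $d \leq w$. The key structural observation is that it suffices to look for a \emph{prime} almost divisor: if $d > 1$ is an $\alpha$-almost divisor and $p \mid d$ is prime, then every element divisible by $d$ is divisible by $p$, so $X(d) \subseteq X(p)$ and therefore $|\overline{X(p)}| \leq |\overline{X(d)}| \leq \alpha$; thus $p$ is an $\alpha$-almost divisor as well. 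Consequently $X$ has an $\alpha$-almost divisor if and only if it has a prime one, and every such prime divides some $x_i$ and hence is at most $w$.

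Given this reduction, the algorithm is as follows. Compute the prime factorization of every $x_i$ using Lemma~\ref{lem:prime-factorize}, in $\widetilde{O}(n + \sqrt{w})$ time. Then, for each prime $p$ occurring in at least one of these factorizations, compute $c_p := |X(p)|$ by scanning all factorizations and, for every $i$ and every distinct prime $p \mid x_i$, incrementing a counter indexed by $p$ (maintained in a hash table, or obtained by sorting the list of all such $(p,i)$ pairs by $p$ and counting runs). Since $x_i \leq w$ has at most $\log_2 w$ distinct prime factors, the total number of increments is $O(n\log w)$, so this step runs in $\widetilde{O}(n)$ time. Finally, if some prime $p$ satisfies $c_p \geq n - \alpha$, output $p$: it is an integer $> 1$ with $|\overline{X(p)}| = n - c_p \leq \alpha$, hence an $\alpha$-almost divisor. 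Otherwise report that no $\alpha$-almost divisor exists; this is correct because any prime $p$ not appearing in some factorization has $X(p) = \emptyset$, so $c_p = 0 < n - \alpha$, and by the prime reduction the absence of a prime almost divisor implies the absence of any almost divisor.

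The total running time is $\widetilde{O}(n + \sqrt{w}) + \widetilde{O}(n) = \widetilde{O}(n + \sqrt{w})$, as claimed. The only nontrivial ingredient is Lemma~\ref{lem:prime-factorize}; everything else is near-linear bookkeeping. I expect the main point to get right is the prime reduction together with the incidence-counting bound: this is precisely what shrinks the candidate set from all $\Theta(w/\log w)$ primes up to $w$ down to the $O(n\log w)$ primes that actually divide some input element, thereby avoiding an $\Omega(w)$-time scan and keeping the post-factorization work within $\widetilde{O}(n)$.
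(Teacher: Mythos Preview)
Your argument is correct. The paper does not prove this lemma at all; it simply imports it from \cite[Theorem 4.12]{BW21}, so there is no in-paper proof to compare against. Your reduction to prime almost divisors together with the $O(n\log w)$ incidence-counting bound after invoking Lemma~\ref{lem:prime-factorize} is exactly the natural approach and matches the argument in the cited reference.
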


If $X$ has no almost divisor, we let $d = 1$ and stop. Suppose that $X$ has an almost divisor. Starting with $X_1 = X$, we iteratively find and remove almost divisors. That is, if $X_{i}$ has an almost divisor $d_i$, we continue with $X_{i+1} := X_{i}(d_i)/d_i$. We stop when the multi-set $X_{k}$ has no almost divisor, and let $d:= d_1\cdots d_{k-1}$.

It is easy to see that $X(d)/d = X_k$, so it has no almost divisor. Since $d = d_1\cdots d_{k-1} \leq w$ and $d_i\geq 2$ for all $i$, the number of iterations is bounded by $\log w$. Then 
\[
|\overline{X(d)}| = \sum_{i=1}^k |\overline{X_{i-1}(d_{i})}|\leq \sum_{i=1}^k\alpha\leq \alpha\log w.
\]
Since there are at most $\log w$ iterations and each costs $\widetilde{O}(n+\sqrt{w})$ time, the total running time is $\widetilde{O}(n+\sqrt{w})$.
\end{proof}

$\overline{X(d)}$ is in fact the set $G$. Now we show that if $X$ does not have almost divisors, we can find a small subset $R$ that for any possible common difference, $R$ can generate all the remainders modulo it.

\begin{lemma}\label{lem:part-gen-rem}
    Given $\alpha > 0$ and a multi-set $X$ of $n$ positive integers from $[1, w]$, if $X$ has no $\alpha$-almost divisor, in $\widetilde{O}(n + \sqrt{w})$ time, we extract a set $R \subseteq X$ such that the following holds.
    \begin{enumerate}[label={\normalfont (\roman*)}]
        \item $|R| \leq 4\alpha\log w$.

        \item $\mathcal{S}_R \bmod d = [0, d-1]$ for any $1 < d \leq \alpha$.
    \end{enumerate}
\end{lemma}
\begin{proof}
We first show that if $X$ has no $\alpha$-almost divisor, in $\widetilde{O}(n + \sqrt{w})$ time, we can extract a subset $R \subseteq X$ such that $|R| \leq 4\alpha\log w$ and for any $1 < d \leq \alpha$, the multi-set $R$ contains at least $d$ integers not divisible by $d$. That is, $|\overline{R(d)}| \geq d$.

Pick an arbitrary subset $R'\subseteq X$ of size $2\alpha$. Let $P$ be the set of primes $p$ with $p\leq \alpha$ and $|\overline{R'(p)}|\leq \alpha$. We have that
\begin{claim}[{\cite[Claim 4.21]{BW21}}]
     $|P|\leq 2\log w$.
\end{claim}
We can compute the prime factorization of all elements in $R'$ with the following lemma. Therefore, $P$ can be carried out in $\widetilde{O}(\alpha+\sqrt{w})$. 
\begin{lemma}[{\cite[Theorem 3.8]{BW21}}]\label{lem:prime-factorize}
    The prime factorization of $n$ given numbers in $[1,w]$ can be computed in $\widetilde{O}(n + \sqrt{w})$ time.
\end{lemma}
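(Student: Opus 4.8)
The statement to prove is Lemma~\ref{lem:prime-factorize}: computing the prime factorization of $n$ integers in $[1,w]$ in $\widetilde{O}(n+\sqrt{w})$ time. My plan is to first spend $\widetilde{O}(\sqrt{w})$ time precomputing, via a sieve of Eratosthenes, the smallest prime factor $\mathrm{spf}(m)$ of every integer $m \in [1,\sqrt{w}]$. This is the standard linear (or $O(\sqrt{w}\log\log\sqrt{w})$) sieve: initialize an array indexed by $[2,\sqrt{w}]$, and for each prime $p$ in increasing order, mark all unmarked multiples of $p$ with $\mathrm{spf}=p$. Storing $\mathrm{spf}$ for all of $[1,\sqrt{w}]$ takes $O(\sqrt w)$ space and $\widetilde{O}(\sqrt w)$ time.

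Next, for each input integer $x_i \in [1,w]$, I would repeatedly strip off small prime factors: while $x_i$ has a prime factor at most $\sqrt{w}$, I need to find one quickly. The key observation is that the current residual value $r$ of $x_i$ always satisfies $r \le w$, so $r$'s smallest prime factor, if it is at most $\sqrt{w}$, can be located by noting that $r$ has a divisor $d \le \sqrt{r} \le \sqrt{w}$ whose smallest prime factor is exactly $\mathrm{spf}(r)$. Concretely, one maintains $r := x_i$ and, at each step, tests whether $r$ has a prime factor $\le \sqrt w$. If $r \le \sqrt w$, read $\mathrm{spf}(r)$ directly. Otherwise, $r > \sqrt w$ but possibly still has a small prime factor; to detect this, observe that if $p \mid r$ with $p \le \sqrt{w}$, then $r/p \le w/p$, and more usefully $r$ has some divisor in $[1,\sqrt w]$ that is a prime power, or one can simply trial-divide by the first few primes. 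Actually the cleanest route: precompute not just $\mathrm{spf}$ on $[1,\sqrt w]$ but use the following trick — for any $r \le w$ that is not prime, $r$ has a prime factor $\le \sqrt{r} \le \sqrt w$; so it suffices to test divisibility of $r$ by each prime $p \le \sqrt w$, but that is too slow per number. The standard resolution (this is exactly the content of the cited $\widetilde O(n+\sqrt w)$ bound) is to sieve $\mathrm{spf}$ up to $\sqrt w$ and handle each $x_i$ by: repeatedly, if the residual $r \le \sqrt w$ factor it fully via $\mathrm{spf}$; else check whether $r$ is prime by a single primality test, and if not prime recurse on a found factor — but finding a factor of a large $r$ is the sticking point, so instead one observes that after removing all prime factors $\le \sqrt w$ (each removal step can be charged since the residual drops by a factor $\ge 2$ for prime $2$, etc.), the residual is either $1$ or a prime in $(\sqrt w, w]$, and each such removal of a prime $p$ costs $O(\log w)$ via the $\mathrm{spf}$ table applied to $r$ when $r$ has become $\le \sqrt w$, plus $O(\sqrt w / \text{(stuff)})$ — the honest accounting is that the total number of prime factors summed over all $x_i$ is $O(n\log w)$, and each is extracted in polylogarithmic time once we know a small witness divisor.

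\textbf{Main obstacle.} The genuinely delicate point is factoring a residual $r$ when $\sqrt w < r \le w$ and $r$ is \emph{not} prime: the $\mathrm{spf}$ table only covers $[1,\sqrt w]$, and naive trial division by all primes up to $\sqrt w$ costs $\widetilde{\Theta}(\sqrt w)$ per integer, giving $\widetilde O(n\sqrt w)$ total — too slow. The resolution, which I expect the authors' proof to spell out (and which is the standard one for this bound), is a \emph{batch} argument: rather than handling each $x_i$ in isolation, sieve the multiset of residuals collectively. That is, maintain the list of current residuals $r_1,\dots,r_n \le w$; for each prime $p \le \sqrt w$ (obtained from the sieve), and for each residual divisible by $p$, divide it out — but to find which residuals are divisible by $p$ without scanning all of them, group residuals in a hash table keyed by value and, for each prime power $p^k \le w$, enumerate its multiples among the distinct residual values present. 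Since there are at most $w$ distinct residual values and $\sum_{p \le \sqrt w}\sum_{k: p^k \le w} (w/p^k) = O(w\log\log w)$ total multiple-enumeration work, plus the fact that each actual division event reduces a residual by a factor $\ge 2$ (so there are $O(n\log w)$ of them), one gets $\widetilde{O}(n + \sqrt w + w)$ — wait, that is $\widetilde O(w)$, not $\widetilde O(\sqrt w)$. So the correct batch bound must instead only enumerate multiples of primes $p \le \sqrt w$ that actually \emph{appear} as divisors, using that the distinct residual values number at most $\min(n, w)$ and each is $\le w$ but the set of primes dividing \emph{some} residual is small; the careful accounting (which I would verify) charges $\widetilde O(\sqrt w)$ to the sieve and $\widetilde O(n)$ to the divisions, the cross term being absorbed. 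I would present the sieve-up-to-$\sqrt w$ step in full and then invoke this batched extraction, citing~\cite{BW21} for the precise charging scheme, since reproducing that accounting in detail is exactly the routine-but-fiddly part the excerpt defers.
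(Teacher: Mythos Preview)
The paper gives no proof of this lemma: it is stated with the citation \cite[Theorem~3.8]{BW21} and used as a black box, so there is nothing in the paper to compare your attempt against.

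That said, your reconstruction has a real gap, and you essentially acknowledge it yourself. The sieve up to $\sqrt{w}$ and the observation that any composite residual $r\le w$ has a prime factor $\le\sqrt{w}$ are fine, but you never explain how to \emph{find} that factor for a residual $r\in(\sqrt w,\,w]$ in polylogarithmic time. Your batch idea---enumerating multiples of each prime $p\le\sqrt w$ among the residuals---costs $\sum_{p\le\sqrt w} w/p = \widetilde\Theta(w)$, as you compute, and the hand-wave that ``the cross term is absorbed'' does not survive scrutiny: nothing in your argument prevents $\Theta(\sqrt w/\log w)$ primes from each requiring a scan. The actual algorithm behind the cited bound is not elementary trial division plus bookkeeping; it uses a product/remainder-tree technique (in the style of Bernstein's batch smooth-part algorithm) to compute, for all inputs simultaneously, their $\sqrt w$-smooth parts in time near-linear in the total bit-length, which is $\widetilde O(n+\sqrt w)$ since $\prod_{p\le\sqrt w}p$ has $\Theta(\sqrt w)$ bits. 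If you want to include a proof rather than a citation, that is the mechanism you need to spell out; the charging arguments you sketch do not get there.
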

For any $p\in P$, let $R_p \subseteq \overline{X(p)}$ be an arbitrary subset of size $\alpha$. Since $X$ has no almost divisor, $R_p$ always exists and it can be found in $O(n)$ time.

Let $R=R'\cup \bigcup_{p\in P}R_p$. For the first property, note that 
\[
|R|\leq |R'|+\sum_{p\in P} |R_p|\leq 2\alpha+|P|\cdot\alpha\leq  4\alpha\log w.
\]
For any integer $1<d\leq \alpha$, we have $|\overline{R(d)}|\geq |\overline{R(p)}| \geq 2$, where $p$ is an arbitrary prime factor of $d$. Now we can use the following lemma to conclude the proof. 

\begin{lemma}[{\cite[Theorem 4.22]{BW21}}]\label{lem:all-rem}
    Given $\alpha > 0$ and a multi-set $R$, if $|\overline{R(d)}| \geq d$ for any $1 < d \leq \alpha$, then $\mathcal{S}_R \bmod d = [0, d-1]$ for any $1 < d \leq \alpha$.\qedhere
\end{lemma}
\end{proof}

Now we can prove Lemma~\ref{lem:dense-subset-sum}.

\lemdensesubsetsum*

\begin{proof}
    Let $\alpha = \sqrt{t/w}$. We first compute $d$ by using Lemma~\ref{lem:part-almost-div}.  Let $G = \overline{X(d)}$ and $X' = X(d)/d$.  Since $X'$ has no $\alpha$-almost divisor, we can extract a set $R'$ from $X'$ via Lemma~\ref{lem:part-gen-rem}. Let $R = dR'$ and $D = d(X'\setminus R')$. The total time cost is $\widetilde{O}(n + \sqrt{w})$. All the stated properties can be easily verified. 
\end{proof}

\subsection{Proof of Lemma~\ref{lem:augment-ap}}

\lemaugmentap*

\begin{proof}
    Let $y$ be an arbitrary integer in $[\sigma(R) + a_k, \sigma(Q)]$. We shall prove $y \in \mathcal{S}(R\cup P\cup Q)$ by showing $y = \sigma(R') + \sigma(P') + \sigma(Q')$ for some $R' \subseteq R$, $P' \subseteq P$, and $Q' \subseteq Q$. 

    We determine $Q'$ first. Since $0\leq y - a_{k}\leq \sigma(Q)$, there is some $Q' \subseteq Q$ such that $y - a_{k} \leq \sigma(Q') \leq y - a_k + w$. Equivalently,
   \[
      a_k - w \leq y - \sigma(Q') \leq a_k.
   \]
   Since $a_{k} - w = a_1 + (k-1)\Delta - w \geq a_1 + \sigma(R)$, we have $a_1 + \sigma(R)\leq y - \sigma(Q')\leq a_k$.

   Next, we determine $R'$. Let $R'$ be a subset of $R$ with $\sigma(R') \equiv y - \sigma(Q') - a_1 \pmod \Delta$.  $R'$ must exist since $\mathcal{S}_R \bmod \Delta = [0, \Delta-1]$.  Now consider $y - \sigma(Q') - \sigma(R')$. We have $y - \sigma(Q') - \sigma(R') \equiv a_1 \pmod \Delta$ and
   \[
       a_1 \leq a_1 + \sigma(R) - \sigma(R')\leq y - \sigma(Q') - \sigma(R') \leq a_{k}.
   \]
   Therefore, $y - \sigma(D') - \sigma(R')$ is a term in the arithmetic progression $\{a_1, \ldots, a_k\}$; there must be a set $P' \subseteq P$ with $\sigma(P') = y - \sigma(Q') - \sigma(R')$.
\end{proof}

\section{The Dense Case}\label{sec:dense}
In this section, we use additive combinatorics tools to characterize the condition under which $D$ will have a subset $P$ that satisfies Lemma~\ref{lem:key}(i) and (ii).  This section is completely structural, and hence, does not depend on the algorithm. 

The following lemma is the main tool we will use in this section. It simplifies and generalizes a similar result in~\cite{CLMZ24cSTOCPartition}. Basically, given many sets with a large total size, we can select some sets so that their sumset has an arithmetic progression. In addition, if we require the total size to be around $\rho$ times larger than the threshold, then we can select some sets so that the sumset of any $1/\rho$ fraction of these selected sets has an arithmetic progression. In our usage, we always choose those with small weights (small $f$ value).

\begin{restatable}{lemma}{lemapwithsmallelement}
\label{lem:ap-with-small-element}
    There exists some constant $C_{\ap}$ such that the following is true. Let $A_1, \ldots, A_{\ell}$ be non-empty subsets of integers. Let $f: \{A_i\}_{i=1}^\ell \to \mathbb{N}$ be a weight function.  Let $u$ be the maximum diameter of all the $A_i$'s. If $\sum_{i=1}^\ell |A_i| \geq \ell + 4C_{\ap}\rho u'\log u'$ for some $u' \geq u$ and some $\rho \geq 1$, then we can select a collection of sets $A_i$ such that the sumset of the selected sets contains an arithmetic progression of length at least $u'$. Let $I$ be the set of the indices of the selected sets $A_i$. We can also guarantee that
    \[
        \sum_{i\in I}f(A_i) \leq \frac{1}{\rho}\sum_{i=1}^{\ell}f(A_i).
    \]
\end{restatable}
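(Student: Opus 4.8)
The plan is to reduce this to the underlying additive-combinatorics fact (the Szemer\'edi--Vu-type statement that a sumset of many sets with large total size contains a long arithmetic progression), applied not to all of $A_1,\ldots,A_\ell$ but to a carefully chosen sub-family whose total weight is controlled. First I would dispose of the trivial bookkeeping: the hypothesis $\sum_i |A_i| \ge \ell + 4C_{\ap}\rho u'\log u'$ says that $\sum_i(|A_i|-1) \ge 4C_{\ap}\rho u'\log u'$, i.e.\ the ``excess size'' $\sum_i(|A_i|-1)$ is large; this excess is the quantity that drives the arithmetic-progression guarantee, since a singleton set contributes nothing to lengthening a progression in the sumset.

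The key step is a weighted averaging / greedy selection argument. Order the sets $A_1,\ldots,A_\ell$ by increasing weight $f(A_i)$, and let $I$ be a prefix of this ordering chosen to be the smallest prefix whose excess size $\sum_{i\in I}(|A_i|-1)$ reaches the threshold $C_{\ap} u'\log u'$ needed to invoke the basic long-AP result on $\{A_i\}_{i\in I}$ (with diameter bound $u\le u'$). I then need two things. (a) Such a prefix exists and is ``short'' in weight: because the total excess is at least $4C_{\ap}\rho u'\log u' \ge \rho \cdot (C_{\ap}u'\log u')$, the prefix $I$ accounting for only a $1/\rho$ fraction of the total excess already suffices — and since $I$ consists of the lightest sets, a $1/\rho$ fraction of the sets-by-excess has at most a $1/\rho$ fraction of the total weight, giving $\sum_{i\in I}f(A_i)\le \frac1\rho\sum_i f(A_i)$. (The precise way to make ``lightest $1/\rho$ fraction has $\le 1/\rho$ of the weight'' rigorous is to weight each set by its excess $|A_i|-1$ and observe that a prefix in the weight-ordering carrying a $1/\rho$-fraction of total excess-weight carries at most a $1/\rho$-fraction of total $f$-weight, by the rearrangement/Chebyshev-sum inequality; sets with zero excess can be discarded up front since they never help.) (b) Once $I$ is fixed, applying the basic additive-combinatorics lemma to $\{A_i\}_{i\in I}$ — whose diameters are all $\le u\le u'$ and whose excess size is $\ge C_{\ap}u'\log u'$ — yields an arithmetic progression of length $\ge u'$ in $\sum_{i\in I}A_i$, which is exactly the sumset of the selected sets.

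The main obstacle I anticipate is matching the exact form of the cited additive-combinatorics result (the appendix lemma derived from \cite{SV05}) to what is needed here: that lemma presumably gives a long AP when the total size of a family is at least some $C\cdot u\log u$ above the number of sets, and I must make sure the constant $C_{\ap}$ and the logarithmic factor $\log u'$ versus $\log u$ line up, and that the selection step only loses a constant (absorbed into $C_{\ap}$) rather than an extra $\rho$ or $\log$ factor. A secondary subtlety is handling degenerate cases — sets of size $1$, very small $u'$, or a family where almost all the excess sits in one heavy set — so that the ordering-by-weight argument still delivers the $1/\rho$ weight bound; I would handle these by first throwing away all singleton $A_i$ (they contribute $0$ to both the excess and can only increase weight) and then noting the remaining family still satisfies the excess hypothesis, after which the prefix argument is clean.
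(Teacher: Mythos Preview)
Your step (a), the weight bound, has a genuine gap. You assert that a prefix $I$ in the $f$-ordering whose excess $\sum_{i\in I}(|A_i|-1)$ is at most a $1/\rho$ fraction of the total excess must carry at most a $1/\rho$ fraction of the total $f$-weight, appealing to a ``rearrangement/Chebyshev-sum'' argument. This is false in general, because the excesses $|A_i|-1$ are non-uniform and need not be correlated with $f$. Concretely, take $\rho=2$ and threshold $T=C_{\ap}u'\log u'$; put $T$ sets of size $2$ with $f=1$ together with about $7T/u'$ sets of size $u'$ with $f=2$. The total excess is then at least $8T=4C_{\ap}\rho u'\log u'$ as required, but the minimal prefix in the $f$-order reaching excess $T$ consists of all $T$ size-$2$ sets, whose $f$-weight $T$ is nearly the entire total $f$-weight $T+O(T/u')$ --- far above the target $\tfrac12\bigl(T+O(T/u')\bigr)$. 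The only general inequality available for a smallest-$f$ prefix is $\sum_{i\in I}f_i\le (|I|/\ell)\sum_i f_i$, which controls $f$-weight by the \emph{number} of sets selected, not by their aggregate excess; neither Chebyshev's sum inequality nor rearrangement bridges this.

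The paper's proof avoids the difficulty by first \emph{bucketing by size}: a counting argument on $\ell_k=|\{i:|A_i|\ge k\}|$ shows that some $k\in[2,u]$ has $\ell_k\ge 2C_{\ap}\rho u'/k$. From this pool of sets of size $\ge k$ one picks the $\lceil C_{\ap}u'/k\rceil$ sets of smallest $f$-value. Now both pieces go through cleanly: the selected family has uniform size $\ge k$ and cardinality $\ge C_{\ap}u'/k$, so the $d=1$ case of Szemer\'edi--Vu (Corollary~\ref{coro:ap}) yields the length-$u'$ progression directly; and since one chose the smallest-$f$ fraction of a pool of size $\ge 2C_{\ap}\rho u'/k$, the trivial ``$m$ smallest out of $M$'' bound gives the $1/\rho$ weight control. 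The bucketing is precisely what lets the AP argument (which wants uniform-size sets) and the weight argument (which wants to count sets, not excess) coexist; your attempt to decouple them via aggregate excess loses exactly this coupling.
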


To prove it, we first show an additive combinatorics result from Szemer{\'e}di and Vu~\cite{SV05}.

\begin{theorem}[Corollary 5.2~\cite{SV05}]\label{thm:ap}
    For any fixed integer $d$, there are positive constants $c_1$ and $c_2$ depending on $d$ such that the following holds.  Let $A_1, \ldots, A_{\ell}$ be subsets of $[1,u]$ of size $k$.  If $\ell^d k \geq c_1u$, then $A_1 + \cdots + A_\ell$ contains an arithmetic progression of length at least $c_2\ell k^{1/d}$.
\end{theorem}
Although not explicit in the statement, the above theorem actually assumes that $k \geq 2$. Therefore, we have the following corollary.

\begin{corollary}\label{coro:ap}
    There exists a sufficiently large constant $C_{\ap}$ such that the following holds.  Let $A_1, \ldots, A_{\ell}$ be subsets of $[1,u]$ of size at least $k$. If $k \geq 2$ and $\ell k \geq cu'$ for some $u' \geq u$, then $A_1+ \cdots + A_\ell$ contains an arithmetic progression of length at least $u'$.
\end{corollary}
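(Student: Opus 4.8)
The plan is to derive Corollary~\ref{coro:ap} from Theorem~\ref{lem:ap} by choosing the parameter $d$ large enough that the weak length bound $c_2 \ell k^{1/d}$ in the theorem becomes at least $u'$, and then absorbing all the $d$-dependent constants into a single constant $C_{\ap}$. The only subtlety is that Theorem~\ref{lem:ap} requires the sets to have \emph{exactly} size $k$ and to satisfy the hypothesis $\ell^d k \geq c_1 u$, whereas we are given sets of size \emph{at least} $k$ and the weaker-looking hypothesis $\ell k \geq c u'$. So first I would pass to subsets: replace each $A_i$ by an arbitrary subset $A_i' \subseteq A_i$ of size exactly $k$ (possible since $|A_i| \geq k \geq 2$); any arithmetic progression found in $A_1' + \cdots + A_\ell'$ lies in $A_1 + \cdots + A_\ell$, so it suffices to work with the $A_i'$.

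Next I would fix $d$. Since $A_1' + \cdots + A_\ell' \subseteq [\ell, \ell u]$, the sumset trivially cannot contain a progression longer than $\ell u \geq \ell$; combined with $\ell k \geq c u' \geq c u$ this forces $\ell$ to be somewhat large relative to $u$ unless $k$ is large, but the cleaner route is simply: from $\ell k \geq c u'$ we get $\ell \geq c u'/k$, hence the theorem's conclusion length $c_2 \ell k^{1/d} \geq c_2 c u' k^{1/d - 1}$. This is $\geq u'$ as soon as $c_2 c \geq k^{1 - 1/d}$, i.e. $d \geq \frac{\log k}{\log k - \log(c_2 c)}$ (valid once $c_2 c \geq 2$, say, by taking $c$ large). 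Since $k$ can be arbitrarily large this would force $d \to \infty$, which is \textbf{not} allowed — $d$ must be a fixed integer in Theorem~\ref{lem:ap}. The fix is the standard one: \emph{truncate} $k$. Set $k_0 := \min\{k, K\}$ for a suitable constant $K$ to be chosen, and apply the theorem with sets of size $k_0$ (pass to size-$k_0$ subsets). Then $k_0^{1/d}$ is bounded by a constant, the "length $\geq u'$" requirement reads $c_2 \ell k_0^{1/d} \geq u'$, and since $\ell \geq c u' / k \geq c u'/k$ we need $c_2 c \, k_0^{1/d}/k \cdot u' \geq u'$... which still degrades in $k$.

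So the correct truncation is on the other side: keep all of $k$ but choose $d$ as a function of the \emph{ratio} we need. Concretely, I would argue as follows. If $k \geq u'$ already, a single set $A_1'$ of size $k$ — no, it need not be a progression. Instead: note $\ell k \geq c u'$. If $\ell \geq u'$, apply the theorem with $d = 1$: the hypothesis $\ell k \geq c_1 u$ holds (take $c \geq c_1$), and the conclusion gives a progression of length $\geq c_2 \ell k \geq c_2 \ell \geq c_2 u'$; rescale $c$ so this is $\geq u'$ — wait, we need length $\geq u'$ not $\geq c_2 u'$ with $c_2$ possibly $< 1$, so instead take a sub-collection of $\lceil u'/(c_2 k)\rceil \leq \ell$ of the sets. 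If instead $\ell < u'$, then $k > cu'/\ell > c$, so $k$ is at least the constant $c$; pick $d$ to be the least integer with $\ell^d \geq u'$ (so $d = \lceil \log u' / \log \ell \rceil$, which is \emph{not} constant) — again illegal.

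Given these tensions, I expect the genuine argument is the one in~\cite{chen2024approximating}: it does \textbf{not} use a single application of Theorem~\ref{lem:ap} but rather combines it with a greedy/iterative grouping (merge the $A_i$ in blocks, apply the theorem within each block at a fixed small $d$ such as $d=2$, obtaining medium-length progressions, then concatenate or re-apply). I would therefore structure the proof as: (1) fix $d = 2$ and let $C_{\ap}$ be a constant with $C_{\ap} \geq c_1$ and $C_{\ap} \geq 4/c_2^2$ for the resulting $c_1, c_2$; (2) partition $\{1,\dots,\ell\}$ into $m := \lceil \ell k / (C_{\ap} u')\rceil$ blocks, wait — the right split is into blocks each of size $s$ where $s^2 k \asymp u$, producing within each block a progression of length $\gtrsim s k^{1/2}$, with common difference dividing some bounded quantity; (3) observe there are $\ell/s$ blocks and the sumset of their progressions is a progression-of-progressions whose length multiplies up to $\gtrsim (\ell/s)\cdot s k^{1/2} = \ell k^{1/2} \geq$ (by hypothesis, after adjusting $c$) $u'$. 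The main obstacle — the step I'd spend the most care on — is exactly \textbf{controlling the common differences} so that the $\ell/s$ individual progressions concatenate into one honest arithmetic progression rather than a messy sumset; this is where one needs either that all blocks can be made to share a common difference, or that one absorbs a "base" progression of length $\geq$ (common difference) so that the Freiman-type argument (as in Lemma~\ref{lem:augment-ap}) closes the gaps. I would therefore defer the detailed bookkeeping and cite the construction of~\cite{chen2024approximating}, checking only that the quantitative thresholds match the constant $C_{\ap}$ claimed.
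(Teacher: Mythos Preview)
You are overcomplicating a one-line argument. In your own computation you wrote that the theorem's output length is $c_2\ell k^{1/d}\geq c_2 c\,u'\,k^{1/d-1}$; set $d=1$ and this becomes $c_2 c\,u'$, which is $\geq u'$ as soon as $C_{\ap}\geq 1/c_2$. You even touched the case $d=1$ explicitly later, obtaining ``length $\geq c_2\ell k$'', but then inexplicably dropped the factor $k$ (writing $c_2\ell k\geq c_2\ell$) and so had to impose the spurious side condition $\ell\geq u'$. Had you kept the $k$, the hypothesis $\ell k\geq C_{\ap}u'$ gives $c_2\ell k\geq c_2 C_{\ap}u'\geq u'$ immediately, for \emph{all} $\ell$ and $k$, no case split needed.

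The paper's proof is exactly this: apply Theorem~\ref{lem:ap} with $d=1$, choose $C_{\ap}\geq c_1$ so the hypothesis $\ell k\geq c_1 u$ holds (using $u'\geq u$), and choose $C_{\ap}\geq 1/c_2$ so the output length $c_2\ell k\geq c_2 C_{\ap}u'\geq u'$. Your preliminary reduction to size-exactly-$k$ subsets is fine and is implicit in the paper. The blocking/iterative scheme you sketch at the end, and the worries about common differences, are entirely unnecessary here.
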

\begin{proof}
    Let $c_1$ and $c_2$ are two constants for $d=1$ in Theorem~\ref{thm:ap}.  Assume that $C_{\ap} \geq c_1$ and that $C_{\ap}c_2 \geq 1$ since $C_{\ap}$ is sufficiently large.  Since $\ell k \geq C_{\ap}u' \geq c_1u$, by Theorem~\ref{thm:ap}, $A_1 + \cdots + A_\ell$ contains an arithmetic progression of length at least $c_2\ell k \geq c_2C_{\ap} u' \geq  u'$.
\end{proof}

The requirement that each subset be large can be relaxed to a condition on the total size of all subsets, which is often more convenient to work with. Furthermore, by increasing the threshold for density, it suffices to use only a smaller subset to construct an arithmetic progression.

\begin{lemma}
\label{lem:ap-with-small-element-0}
    There exists some constant $C_{\ap}$ such that the following is true. Let $A_1, \ldots, A_{\ell}$ be non-empty subsets of $[1,u]$ for some positive integer $u$. If $\sum_{i=1}^\ell |A_i| \geq \ell + 4C_{\ap}\rho u'\log u'$ for some $u' \geq u$ and some $\rho \geq 1$, then we can select a collection $\{A_i\}_{i\in I}$ such that $\sum_{i\in I} A_i$ contains an arithmetic progression of length at least $u'$. If we are also given a function $f$ that maps each $A_i$ to a non-negative integer, we can also guarantee that 
    \[
        \sum_{i\in I}f(A_i) \leq \frac{1}{\rho}\sum_{i=1}^{\ell}f(A_i).
    \]
\end{lemma}
\begin{proof}
    When $u=1$, the proof is trivial.  Assume that $u \geq 2$.  Let $c$ denote $C_{\ap}$.  We claim that for some $k \in [2, u]$, at least $\frac{2c\rho u'}{k}$ sets from $\{A_1, \ldots, A_\ell\}$ have size at least $k$.  Assume that the claim holds. We select $\lceil\frac{cu'}{k}\rceil$ such $A_i$'s greedily in the sense that we always prefer those with small $f(A_i)$.  Let $I$ be the index of the selected $A_i$'s. Clearly, $\{A_i\}_{i\in I}$ satisfies the condition of Corollary~\ref{coro:ap}, and hence, $\sum_{i\in I} A_i$ contains an arithmetic progression of length at least $u'$.  Since we select $\lceil\frac{cu'}{k}\rceil$ out of $\frac{2c\rho u'}{k}$ sets $A_i$'s and we prefer those with small $f(A_i)$, 

       \[
        \frac{\sum_{i \in I} f(A_i)}{\sum_{i=1}^\ell f(A_i)} \leq \frac{\lceil\frac{cu'}{k}\rceil}{\frac{2c\rho u'}{k}}\leq \frac{\frac{2cu'}{k}}{\frac{2c\rho u'}{k}} \leq \frac{1}{\rho}.
    \]
    The second inequality is due to that $\frac{cu'}{k} \geq \frac{cu}{k} \geq c \geq 1$.

    We prove the claim by contradiction.  Let $\ell_k$ be the the number of sets from $\{A_1, \ldots, A_\ell\}$ that have size at least $k$. Suppose that $\ell_k < \frac{2c\rho u'}{k}$ for any $k \in [2, u]$.
    Then 
    \begin{align*}
        \sum_{i=1}^{\ell}|A_i| &= \sum_{k=1}^{u} k(\ell_k - \ell_{k+1}) \\&=\sum_{k=1}^{u}\ell_k \\&<\ell_1 + \sum_{k=2}^{u} \frac{2c \rho u'}{k} \\&\leq \ell + 2c\rho u' (1 + \log u) \\&\leq \ell + 4c\rho u'\log u'.
    \end{align*}
    Contradiction. (The last inequality is due to our assumption that $u \geq 2$.) 
\end{proof}

Now we can generalize it to Lemma~\ref{lem:ap-with-small-element}.
\begin{proof}[Proof of Lemma~\ref{lem:ap-with-small-element}]
    For each $A_i$, we define $A'_i = A_i - \min(A_i) + 1$.  Then we have each $A'_i$ as a subset of $[1, u]$. By Lemma~\ref{lem:ap-with-small-element-0}, we can a select collection $\{A'_i\}_{i\in I}$ such that $\sum_{i\in I} A'_i$ contains an arithmetic progression of length at least $u'$ for some $u'\geq u$. This implies that $\sum_{i\in I} A_i$ contains an arithmetic progression of length at least $u'$. Moreover, if we define $f(A'_i) = f(A_i)$, then Lemma~\ref{lem:ap-with-small-element-0} guarantees that 

    \[
    \sum_{i\in I}f(A_i) \leq \frac{1}{\rho}\sum_{i=1}^{\ell}f(A_i).\qedhere
    \] 

\end{proof}

Below we give a condition under which $D$ is dense, and prove that, if dense, $D$ has a subset $P$ that satisfies Lemma~\ref{lem:key}(i) and (ii).

\begin{definition}\label{def:dense}
    Let $D$ be a multi-set of integers from $[1,w]$. Let $t$ be a positive integer. We say $D$ is \textbf{dense} with respect to $t$ if there exist a positive integer $\rho$, a partition $D_1, \ldots, D_\ell$ of $D$, a set $S_i \subseteq \mathcal{S}(D_i)$ for each $D_i$, and a weight function $f: \{S_i\}_{i=1}^\ell \to \mathbb{N}$ such that the following is true.
    \begin{enumerate}[label={\normalfont (\roman*)}]
        \item $\sum_{i=1}^\ell |S_i| \geq \ell + 4C_{\ap}\rho u' \log u'$ for some $u' \geq \max\{u, $ $5\sqrt{wt}\log w\} $, where $u$ is the maximum diameter of the sets $S_i$ and $C_{\ap}$ is the constant in Lemma~{\normalfont~\ref{lem:ap-with-small-element}}.

        \item $\max(S_i) \leq f(S_i) \leq \sigma(D_i)$ for all $i \in [1, \ell]$.

        \item $\frac{3t}{2} \leq \sum_{i=1}^\ell f(S_i) \leq \frac{\rho t}{2}$.
    \end{enumerate}
\end{definition}

In the above definition, property (i) ensures that there is a long arithmetic progression. As one will see in the proof of the next lemma, properties (ii) and (iii) ensure that the largest term in the arithmetic progression is at most $t/2$, and that the integers that are not used to generate the arithmetic progression have a sum at least $t$.

\begin{lemma}\label{lem:good-ap}
    Let $D$ be a multi-set of integers from $[1,w]$. Let $t$ be a positive integer. If $D$ is dense with respect to $t$, then $D$ has a subset $P$ such that the following hold.
    \begin{enumerate}[label={\normalfont (\roman*)}]
        \item The set $\mathcal{S}(P)$ contains an arithmetic progression $(a_1, \ldots, $ $a_k)$ with $k \geq 5\sqrt{wt}\log w$ and $a_k \leq t/2$.

        \item $\sigma(D) - \sigma(P) \geq t$.
    \end{enumerate} 
\end{lemma}
\begin{proof}
    Let $\rho$, $(D_1,\ldots, D_\ell)$, $(S_1, \ldots, S_\ell)$, and $f$ be as in Definition~\ref{def:dense}. Property (i) of Definition~\ref{def:dense} implies that the sets $S_i$ satisfy the condition of Lemma~\ref{lem:ap-with-small-element}, and hence we can select a collection of sets $S_i$ such that the sumset of the selected sets contains an arithmetic progression $(a_1, \ldots, a_k)$ of length at least $5\sqrt{wt}\log w$. Clearly $k  \geq 5\sqrt{wt}\log w$. Let $I$ be the set of the indices of the selected sets $S_i$.  Definition~\ref{def:dense}(ii)(iii) and Lemma~\ref{lem:ap-with-small-element} guarantees that 
    \begin{equation}\label{eq:f}
            \sum_{i\in I}\max(S_i)\leq \sum_{i \in I}f(S_i) \leq \frac{1}{\rho}\sum_{i=1}^\ell f(S_i) \leq \frac{t}{2}.
    \end{equation}
    Since $a_k \in \sum_{i \in I}S_i$, we have $a_k \leq \sum_{i\in I}\max(S_i) \leq t/2$. 

    Let $P = \bigcup_{i \in I}D_i$. Clearly, $\mathcal{S}(P)$ contains $(a_1, \ldots, a_k)$. Definition~\ref{def:dense}(ii)(iii) and Lemma~\ref{lem:ap-with-small-element} guarantees that
    \begin{align*}  
        \sigma(D) - \sigma(P) &= \sum_{i \notin I} \sigma(D_i) \\
        &\geq \sum_{i \notin I} f(S_i) \\
        &\geq \sum_{i=1}^\ell f(S_i) - \sum_{i \in I} f(S_i) \\
        &\geq \frac{3t}{2} - \frac{t}{2} = t.
    \end{align*}
    The last inequality is due to~\eqref{eq:f} and Definition~\ref{def:dense}(iii).
\end{proof}

\section{The Sparse Case}\label{sec:sparse}
Given Lemma~\ref{lem:good-ap} in the previous section, the key lemma (Lemma~\ref{lem:key}) can be reduced to the following lemma.

\begin{restatable}{lemma}{lemsparse}\label{lem:sparse}
    Let $D$ be a multi-set of at most $n$ positive integers from $[1, w]$. Let $t$ be a positive integer. In $\widetilde{O}(n + \sqrt{wt})$ time, we can obtain one of the following results.
    \begin{itemize}
        \item{Sparse case:} we can obtain a set $S \subseteq \mathcal{S}(D) \cap [t - 5\sqrt{wt}\log w, t]$ containing any $s \in \mathcal{S}(D) \cap [t - 5\sqrt{wt}\log w, t]$ with probability $1 - (n + t)^{-\Omega(1)}$, or

        \item {Dense case:} show that $D$ is dense with respect to $t$. (see Definition{\normalfont~\ref{def:dense}})
    \end{itemize}
\end{restatable}
This section is devoted to proving Lemma~\ref{lem:sparse}. In high-level, our algorithm follows the dense-or-sparse framework (See Algorithm~\ref{alg:sketch} for a Sketch of the framework). It partitions $D$ into several groups and then merges the results of the groups in a tree-like manner using sparse convolution. It stops immediately when it encounters a tree level of large total size, and we will show that in this case, $D$ must be dense with respect to $t$.

The algorithm has three phases. In phase one, we randomly partition $D$ into subsets $D_1, \ldots, D_\ell$ for some $\ell \leq 2n$. In phase two, we compute a set $S_i \subseteq \mathcal{S}(D_i)$ for each $D_i$ by further partitioning and (sparse) convolution in a tree-like manner. In phase three, we merge all sets $S_i$ using sparse convolution in a tree-like manner, and the random permutation technique is used to bound the diameters of intermediate nodes (sets) in the tree.  

Before presenting the three phases, we briefly discuss some of the techniques to be used.

The partition of $D$ in phases one and two follows the two-stage color-coding approach proposed by Bringmann~\cite{Bri17}. The partition in phase one ensures that for any $Z$ with $\sigma(Z) = t$, $|D_i \cap Z| \leq k$ for some $k = \mathrm{polylog}(n,t)$ with high probability. Then in phase two, we randomly partition each $D_i$ into roughly $k^2$ subsets $D_{i,1}, \ldots, D_{i,k^2}$. If any subset $D_{i,j}$ contains at most one element from $D_i \cap Z$, then we have that $\sum_j (D_{i,j} \cup \{0\})$ contains $\sigma(Z \cap D_i)$. Standard balls-into-bins analysis shows that this happens with a probability of at least $1/4$. See Lemma~\ref{lem:standard-color-coding}. The probability can be boosted to $1 - (n+t)^{-\Omega(1)}$ by repeating the process for a logarithmic number of times.

\begin{lemma}\label{lem:standard-color-coding}
    Let $D$ be a multi-set of integers. Let $D_1, \ldots, D_{k^2}$ be a random partition of $D$. We assume that each $D_i$ is a set by removing duplicate integers. For any $Z\subseteq D$ with $|Z| = k$, with probability at least $1/4$, every $D_i$ contains at most one element from $Z$. In other words, for any $Z\subseteq D$ with $|Z| = k$, with probability at least $1/4$, we have
    \[
        \sigma(Z) \in (D_1 \cup \{0\}) + (D_2 \cup \{0\}) + \cdots + (D_{k^2} \cup \{0\}).
    \]
\end{lemma}

Recall that in the dense-or-sparse framework, we merge the results of the groups in a tree-like manner using sparse convolution, and only when a tree level has a small total size do we compute this level. This can be easily done using sparse convolution. Basically, we compute the nodes (sets) in a new level one by one, and stop immediately once we find that the total size of the nodes exceeds a given threshold. See the following lemma for details.

\begin{lemma}\label{lem:dense-or-sparse}
   Let $A_1, \ldots, A_\ell$ be non-empty sets of integers, where $\ell$ is even. Let $u$ be the maximum diameter among all sets $A_i$ (i.e., $u = \max_{i\in[1,\ell]} \dm(A_i)$).  Given any positive integer $k$, in $O((k + u) \,\mathrm{polylog}\,u)$ time, we can either
   \begin{enumerate}[label={\normalfont (\roman*)}] 
        \item compute $\{B_1, \ldots, B_{\ell/2}\}$, or

        \item tell that $\sum_{i=1}^{\ell/2}|B_i| \geq k$,
     \end{enumerate}
    where $B_i = A_{2i-1} + A_{2i}$ for $i \in [1, \ell/2]$.
\end{lemma}
\begin{proof}
    If $k \leq \ell/2$, then $\sum_{i=1}^{\ell/2}|B_i| \geq k$. Assume $k > \ell/2$. For $i = 1, 2, \ldots, \ell/2$, we compute $B_i$ via Lemma~\ref{lem:sparse-fft}. If, at some point, we find that the total size of all sets $B_i$ already exceeds $k$, then we stop immediately. See Algorithm~\ref{alg:sum-or-dense} for details. Let $B_{i^*}$ be the last $B_i$ that is computed by the algorithm.  Since the diameter of $B_{i^*}$ is at most $2u$, its size $|B_{i^*}| \leq 2u + 1$. Therefore, $\sum_{i=1}^{i^*}|B_i| \leq k + 2u + 1$.  Therefore, the running time of the algorithm is $O((k + u) \,\mathrm{polylog}\,u)$.
\end{proof}
\begin{algorithm}[t]
    \caption{\texttt{Sum-If-Sparse}$(\{A_1, \ldots, A_\ell\}, k)$}
    \label{alg:sum-or-dense}
    \begin{algorithmic}[1]
    \Statex \textbf{Input:} non-empty
    subsets $A_1, \ldots , A_\ell$ of integers and an integer $k$.
    \Statex \textbf{Output:} $\{B_1, \ldots, B_{\ell/2}\}$ or tells that $\sum_{i=1}^{\ell/2}|B_i| \geq k$, where $B_i = A_{2i-1} + A_{2i}$
    \State $\mathrm{totalSize} := 0$\;
    \For{$i := 1,..., \ell/2$}
        \State $B_i := A_{2i-1} + A_{2i}$ \Comment{compute via Lemma~\ref{lem:sparse-fft}}\;
        \State $\mathrm{totalSize} := \mathrm{totalSize} + |B_i|$\;
        \If{$\mathrm{totalSize} \geq k$}
            \State Stop immediately\;\Comment{$\sum_{i=1}^{\ell/2}|B_i| \geq k$}
        \EndIf
    \EndFor
    \State \Return $\{B_1, \ldots, B_{\ell/2}\}$\;
    \end{algorithmic}
\end{algorithm}

\subsection{Phase One}

In phase one, we present an algorithm that partitions $D$ into $D_1, \ldots, D_\ell$ for some $\ell \leq 2n$ in a way that, with high probability, each $D_i$ contains at most a logarithmic number of elements from any $Z$ with $\sigma(Z) = t$. For technical reasons, we also need to bound the sum of $\max(D_i)$. (Recall that the max of an empty set is defined to be $0$.)

We first partition $D$ into $1 + \lceil\log w\rceil $ subsets $D^0, D^1, \ldots, $ $D^{\lceil\log w\rceil}$ where 
\[
    D^j = \{x \in D : 2^{j} \leq x < 2^{j+1}\}.
\] 
Define $\alpha_j := \min\{\frac{t}{2^{j-1}}, |D^j|\}$ for $j\in[0,\lceil\log w\rceil]$. Let $Z$ be an arbitrary subset of $D$ with $\sigma(Z) \leq t$.  We immediately have $|Z \cap D^j| \leq \alpha_j$. We further partition each $D^j$ randomly into $\alpha_j$ subsets. We can show that each subset has only a logarithmic number of elements from $Z$. For technical reasons, we require these subsets to be non-empty, and this can be done by rearranging elements to empty subsets. See Algorithm~\ref{alg:first-stage} for details. 

\begin{algorithm}[t]
    \caption{\texttt{Phase-One-Partition}($D, t$)}
    \label{alg:first-stage}
    \begin{algorithmic}[1]
    \Statex \textbf{Input:} a multi-set $D$ of at most $n$ integers from $[1,w]$ and a positive integer $t$
    \Statex \textbf{Output:} A partition $D_1, \ldots, D_\ell$ of $D$  
    \State $D^j := \{x \in D : 2^{j} \leq x < 2^{j+1}\}$ for $j \in [0, \lceil\log w\rceil]$
    \State $\alpha_j := \min\{\frac{t}{2^{j-1}}, |D^j|\}$ for $j \in [0, \lceil\log w\rceil]$
    \For{$j = 0, 1, \ldots, \lceil\log w\rceil$}
        \If{$\alpha_j = |D^j|$}
            \State Partition $D^j$ into subsets $D^j_1, \ldots, D^j_{\alpha_j}$ of size $1$
        \Else
            \State Partition $D^j$ into subsets $D^j_1, \ldots, D^j_{\alpha_j}$ randomly\;\label{in-alg-first-stage:partition-randomly}
            \While{$D^j_i = \emptyset$ and $|D^j_{i'}| \geq 2$ for some $i$ and $i'$}
                \State Move an element from $D^j_{i'}$ to $D^j_i$\;\label{in-alg-first-stage:move-to-empty-set}
            \EndWhile
        \EndIf         
    \EndFor
    \State \Return the collection of all subsets $D^j_i$
    \end{algorithmic}
\end{algorithm}

\begin{lemma}\label{lem:first-stage-color-coding}
    Let $D$ be a multi-set of at most $n$ integers from $[1,w]$. Let $t$ be a positive integer. In $O(n)$ time, we can randomly partition $D$ into subsets $\{D_1, \ldots, D_\ell\}$ such that the following is true.  
    \begin{enumerate}[label={\normalfont (\roman*)}]
        \item $\ell$ is a power of $2$ and $\ell \leq 2n$

        \item for each $j \in [0, \lceil\log w\rceil]$, at most $\frac{t}{2^{j-1}}$ sets $D_i$ have $2^j \leq \max(D_i) < 2^{j+1}$. 

        \item $\frac{3t}{2} \leq \sum_{i=1}^{\ell} \max(D_i) \leq 5t\log w$. 

        \item For any $Z \subseteq D$ with $\sigma(Z) \leq t$, for any $0< q < 1$,
            \[
                \mathbf{Pr}\left[|Z \cap D_i| \leq 6\log (n/q) \text{ for all $i$}\right] \geq 1 - q.
            \]
    \end{enumerate}
\end{lemma}
\begin{proof}
    We first partition $D$ into $D_1, \ldots, D_\ell$ via Algorithm~\ref{alg:first-stage}, and show that properties (ii)--(iv) are satisfied. Property (i) may not hold at this moment since $\ell$ may not be a power of $2$, and we will fix it later.

    Let $\alpha_j, D^j, D^j_i$ be defined as in Algorithm~\ref{alg:first-stage}. 
    Property (ii) is straightforward since each $D^j$ is partitioned into $\alpha_j$ subsets and $\alpha_j \leq \frac{t}{2^{j-1}}$.

    Consider property (iii). We first show that every $D^j_i$ is non-empty. If $\alpha_j = |D^j|$, then each $D^i_j$ contains exactly one element and hence is non-empty. Suppose that $\alpha_j < |D^j|$. Although $D^i_j$ may be empty after the random partition (in line \ref{in-alg-first-stage:partition-randomly}), it is guaranteed to get an element in line \ref{in-alg-first-stage:move-to-empty-set}. The non-emptyness of $D^j_i$ implies $2^j\leq \max(D^j_i) < 2^{j+1}$. Now we are ready to prove (iii). The upper bound is easy as 
    \begin{align*}
        \sum_{j=0}^{\lceil\log w\rceil}\sum_{i=1}^{\alpha_j} \max(D^j_i) &\leq \sum_{j=0}^{\lceil\log w\rceil} 2^{j+1}\cdot \alpha_j \\
        &= \sum_{j=0}^{\lceil\log w\rceil} (4t) \leq 5t\log w.
    \end{align*}
    For the lower bound, it is either the case that $\alpha_j = \frac{t}{2^{j-1}}$ for some $j = j^*$ or $\alpha_j = |D^j|$ for all $j$. In the former case, 
    \[
        \sum_{j=0}^{\lceil\log w\rceil}\sum_{i=1}^{\alpha_j} \max(D^j_i) \geq \sum_{i=1}^{\alpha_{j^*}} \max(D^{j^*}_i) \geq 2^{j^*}\cdot \alpha_{j^*} = 2t.
    \]
    In the latter case, every $D^j_i$ contains exactly one element from $D$, so
    \[
        \sum_{j=0}^{\lceil\log w\rceil}\sum_{i=1}^{\alpha_j} \max(D^j_i) = \sigma(D) \geq \frac{3t}{2}.
    \]
    The last inequality is due to the assumption that $\sigma(D) \geq \frac{3t}{2}$ at the end of Section~\ref{sec:keylem}.

    We next prove property (iv). Consider an arbitrary $D^j_i$. If $\alpha_j = |D^j|$, then the stated probability bound holds since $|D^j_i| = 1$ in this case. Suppose $\alpha_j =\frac{t}{2^{j-1}}$. Fix an arbitrary $Z \subseteq D$ with $\sigma(Z) \leq t$. Let $k = |Z\cap D^j|$. We have that $k \leq \frac{t}{2^j} < \alpha_j$. Consider the $D^j_i$ immediately after the random partition (in line \ref{in-alg-first-stage:partition-randomly}). The set $D^j_i$ is among the $\alpha_j$ subsets that form a random partition of $D$. Therefore, $|Z \cap D^j_i|$ can be viewed as the sum of $k$ independent Bernoulli random variables with success probability $1/\alpha_j$. Since $\mathbf{E}[|Z \cap D^j_i|] = k/\alpha_j < 1 \leq \log \frac{n}{q}$, a standard Chernoff bound gives that 
    \[
                \mathbf{Pr}[|Z \cap D^j_i| > 6\log (n/q)] \leq q/n.
    \]
    Lines \ref{in-alg-first-stage:move-to-empty-set} of the algorithm only increase the size of empty sets $D^j_i$, and hence do not affect the probability bound. Recall that the number of subsets $D^j_i$ is at most $\sum_j \alpha_j \leq n$. By union bound, we have
    \begin{align*}
        &\mathbf{Pr}\left[|Z \cap D_i| \leq 6\log (n/q) \text{ for all $i$}\right] \\
        \geq &1 - \mathbf{Pr}\left[|Z \cap D_i| > 6\log (n/q) \text{ for some $i$}\right] \\
        \geq &1-q.
    \end{align*}

    Now we fix property (i). After Algorithm~\ref{alg:first-stage}, we have that
    \[
        \ell = \sum_{j=0}^{\lceil\log w\rceil} \alpha_j \leq \sum_{j=0}^{\lceil\log w\rceil} |D^j| \leq n.
    \]
    We can force it to be a power of $2$ by adding empty sets $D_i$. It is easy to see that adding empty sets does not affect properties (ii)--(iv) and that $\ell$ increases by at most a factor of $2$.

    The running time of Algorithm~\ref{alg:first-stage} is $O(n)$ since $j$th iteration of the for loop costs $O(|D^j| + \alpha_j)$ time. We add at most $n$ empty sets, which can be done in $O(n)$.
\end{proof}

\subsection{Phase Two}
Let $D_1,\ldots, D_\ell$ be the partition of $D$ given by phase one (Lemma~\ref{lem:first-stage-color-coding}). For any $Z \subseteq D$ with $\sigma(Z) \leq t$, we have $|Z \cap D_i| \leq 6\log n/q$ (with probability at least $1-q$). Let $k = 6\log n/q$.  One can tackle each $D_i$ using standard color-coding: partition $D_i$ into $k^2$ subsets, add $0$ to each subset, and compute the sumset $S_i$ of these subsets in a tree-like manner.  In order to fully utilize the dense-or-sparse framework, we shall tackle all subsets $D_i$ simultaneously. More precisely, instead of computing each $S_i$ independently, we compute all sumsets $S_i$ simultaneously in a forest-like manner. If all levels of the forest have a small total size, then we can compute efficiently using sparse convolution. 
If a level of the forest has a large total size, then we can show that $D$ is dense. 
To reduce the error probability, we will repeat the process for logarithmic times.
See Algorithm~\ref{alg:second-stage} for details.

\begin{algorithm}[t]
    \caption{\texttt{Phase-Two-Partition-and-\\Convolution}$(D_1, \ldots, D_\ell, q)$}
    \label{alg:second-stage}
    \begin{algorithmic}[1]
    \Statex \textbf{Input:} a partition $D_1, \ldots, D_\ell$ of $D$ given by Lemma~\ref{lem:first-stage-color-coding}, and an error probability $q$
    \Statex \textbf{Output:} a set $S_i \subseteq \mathcal{S}(D_i)$ for each $D_i$
    \State $k := 6\log \frac{2n}{q}$
    \State $g := k^2$ rounded up to the next power of $2$ 
    \State $S_i = \emptyset$ for $i \in [1,\ell]$
    \For{$r = 1, \ldots, \lceil\log_{4/3} \frac{4n}{q}\rceil$} 
    \\
    \Comment{repeat the standard color-coding for $\lceil\log_{4/3} \frac{4n}{q}\rceil$ times}
    \For{$i = 1, \ldots, \ell$} 
        \State Randomly partition $D_i$ into subsets $D_{i,1}, \ldots, D_{i,g}$ 
        %\For{$j = 1, \ldots, g$}
            \State $D^0_{i,j} := D_{i,j}$ for each $j \in [1, g]$ \\  \Comment{kept only for analysis}
            \State $S^0_{i,j} := D_{i,j}\cup \{0\}$ (removing duplicate elements) for each $j \in [1, g]$.
        %\EndFor
    \EndFor
    \State $u': = \max\{gw+1, 5\sqrt{wt}\log w\}$
    \State $\rho:= 10g\log w$
    \For{$h = 1, \ldots, \log g$} \\\Comment{iteratively compute new levels using Algorithm~\ref{alg:sum-or-dense}}
    \State Run \texttt{Sum-If-Sparse}$(\{S^{h-1}_{1,1},\dots,S^{h-1}_{\ell,g/{2^{h-1}}}\},$ $\frac{\ell g}{2^h} + 4C_{\ap}\rho u'\log u')$ \label{in-alg-second-stage:run sum-if-sparse}
    \If{It returns a result $S'_{1,1},\dots, S'_{\ell,g/{2^h}}$}
    \State $S^h_{i,j}: = S'_{i,j}$ for all $i \in [1, \ell]$ and $j \in [1, \frac{g}{2^h}]$ 
    \\\Comment{$S^h_{i,j}:=S^{h-1}_{i,2j-1} + S^{h-1}_{i,2j}$} 
    \State $D^h_{i,j}: = D^{h-1}_{i,2j-1} \cup D^{h-1}_{i,2j}$ for all $i \in [1, \ell]$ and $j \in [1, \frac{g}{2^h}]$\Comment{kept only for analysis}
    \Else
    \State Stop immediately and \Return \Comment{$D$ is dense}\label{in-alg-second-stage:return-dense}
    \EndIf
    \EndFor
    \State $S_i := S_i \cup S^{\log g}_{i,1}$ for $i \in [1, \ell]$
    \EndFor
    \State \Return $S_1, \ldots, S_{\ell}$
    \end{algorithmic}
\end{algorithm}

We first analyze the running time of Algorithm{\normalfont~\ref{alg:second-stage}}.
\begin{lemma}\label{lem:second-stage-time}
Algorithm{\normalfont~\ref{alg:second-stage}} runs in $O((n+\sqrt{wt}) \polylog$ $(n, t, \frac{1}{q}))$ time.
\end{lemma}
\begin{proof}
    Let all the variables be defined as in Algorithm~\ref{alg:second-stage}. The running time of Algorithm~\ref{alg:second-stage} is dominated by invocations of Algorithm~\ref{alg:sum-or-dense} in line \ref{in-alg-second-stage:run sum-if-sparse}. Since the diameter of every $S^h_{i,j}$ is at most $gw+1$, Lemma~\ref{lem:dense-or-sparse} implies that every invocation of Algorithm~\ref{alg:sum-or-dense} takes ${O}((gw + \ell g + \rho u'\log u')\polylog(gw))$ time. There are at most $\lceil\log_{4/3} \frac{4n}{q}\rceil \cdot \log g$ invocations of Algorithm~\ref{alg:sum-or-dense}. Moreover, Lemma~\ref{lem:first-stage-color-coding} guarantees that $\ell \leq 2n$.  By the selection of the parameters, one can verify that the total running time is $O((n+\sqrt{wt})\polylog(n, t, \frac{1}{q}))$.    
\end{proof}

We prove the correctness of Algorithm{\normalfont~\ref{alg:second-stage}} in the following two lemmas.
\begin{lemma}\label{lem:second-stage-sparse}
    If Algorithm{\normalfont~\ref{alg:second-stage}} returns some sets $S_1,\dots,S_\ell$, then the following holds.
            \begin{enumerate}[label={\normalfont (\roman*)}]
                \item $\max(D_i) \leq \max(S_i) \leq 72\log^2\frac{2n}{q} \max(D_i)$ for any $i\in[1,\ell]$ 

                \item For any $Z \subseteq D$ with $\sigma(Z) \leq t$, 
        \[
            \mathbf{Pr}[\sigma(Z \cap D_i) \in S_i\textrm{ for all $i$}] \geq 1 - q.
        \]
            \end{enumerate}
\end{lemma}
\begin{proof}
We first prove property (i). Each $S_i$ is the union of the sets $S^{\log g}_{i,1}$ in $\lceil\log_{4/3} \frac{4n}{q}\rceil$ repetitions of the standard color-coding. Note that $S^{\log g}_{i,1} = \sum_{j=1}^g S^0_{i,j}$ and that $\max(S^0_{i,j}) \leq \max(D_i)$. Therefore, $\max(S^{\log g}_{i,1}) \leq g\max(D_i) \leq 72\log^2\frac{2n}{q}\max(D_i)$, so is $\max(S_i)$. Since $D_i \subseteq S_i$, it is straightforward that $\max(S_i) \geq \max(D_i)$.

Next, we prove property (ii). Let $Z\subseteq D$ be any subset with $\sigma(Z)\leq t$. Let $Z_i:=Z\cap D_i$ for any $i\in[1,\ell]$. Recall that $D_1, \ldots, D_\ell$ are given by Lemma~\ref{lem:first-stage-color-coding}. Let $k:=6\log\frac{2n}{q}$. Lemma~\ref{lem:first-stage-color-coding} guarantees that 
    \[
                \mathbf{Pr}\left[Z_i \leq k \text{ for all $i$}\right] \geq 1 - \frac{q}{2}.
    \]
    Then Lemma~\ref{lem:standard-color-coding} guarantees that, for each $S^{\log g}_{i,1}$ obtained in a single run of standard color-coding,
    \(
        \mathbf{Pr}[\sigma(Z_i) \in S^{\log g}_{i,1}: |Z_i|\leq k] \geq 1/4.
    \) 
    As the standard color-coding is repeated for  $\lceil\log_{4/3} \frac{4n}{q}\rceil$ times,  we have
    \[
        \mathbf{Pr}[\sigma(Z_i) \in S_i : |Z_i| \leq k] \geq 1 - \frac{q}{4n}.
    \]
    Recall that $\ell \leq 2n$ by Lemma~\ref{lem:standard-color-coding}. By union bound,
    \[
        \mathbf{Pr}[\sigma(Z_i) \in S_i \textrm{ for all $i$} : |Z_i| \leq k \textrm{ for all $i$}] \geq 1 - \frac{q}{2}
    \]
    Therefore,
    \[
            \mathbf{Pr}[\sigma(Z \cap D_i) \in S_i\textrm{ for all $i$}] \geq 1 - q. \qedhere
    \]       
\end{proof}

\begin{lemma}\label{lem:second-stage-dense}
    If Algorithm~\ref{alg:second-stage} stops in line \ref{in-alg-second-stage:return-dense}, then $D$ is dense.
\end{lemma}
\begin{proof}
Suppose that the algorithm stops at line \ref{in-alg-second-stage:return-dense}. That is, Algorithm~\ref{alg:sum-or-dense} finds that $\sum_{i,j}|S^{h'}_{i,j}| \geq \frac{\ell g}{2^{h'}} + 4C_{\ap}\rho u'\log u'$ for some $h'\in[1,\log g]$. We shall show that $D$ satisfies all the conditions in Definition~\ref{def:dense} and hence is dense. Note that $\{D^{h'}_{i,j}\}_{i,j}$ form a partition of $D$ and that $S^{h'}_{i,j} \subseteq \mathcal{S}(D^{h'}_{i,j})$. Condition (i) of Definition~\ref{def:dense} is satisfied. Let $f(S^{h'}_{i,j}) = \max(S^{h'}_{i,j})$. We immediately have that, for all $i$ and $j$,
    \[
        \max(S^{h'}_{i,j}) = f(S^{h'}_{i,j}) \leq \sigma(D^{h'}_{i,j}).
    \] 
    The second inequality is due to that $S^{h'}_{i,j} \subseteq \mathcal{S}(D^{h'}_{i,j})$. Therefore, condition (ii) of Definition~\ref{def:dense} is also satisfied. It remains to verify condition (iii). That is,
    \[
        \frac{3t}{2} \leq \sum_{i,j} \max(S^{h'}_{i,j}) \leq \frac{\rho t}{2} = 5gt\log w.
    \]
    %We prove by induction on $h$. When $h = 0$, we have $\max(S^0_{i,j}) = \max(D_{i,j})$. 
    Since $\{D_{i,j}\}_{i,j}$ are obtained by randomly partition each $D_i$ into $g$ subsets, we have
        \[
            \sum_i\max(D_i) \leq \sum_{i,j}\max(D_{i,j})\leq g\sum_i\max(D_i).
        \]
        Recall that $D_1, \ldots, D_\ell$ are given by Lemma~\ref{lem:first-stage-color-coding}. So $\frac{3t}{2} \leq \sum_i \max(D_i) \leq 5t\log w$, which implies  
        \[
            \frac{3t}{2} \leq \sum_{i,j} \max(D_{i,j}) \leq 5gt\log w.
        \]
        It is easy to see that $\sum_{i,j} \max(S^{h+1}_{i,j}) = \sum_{i,j} \max(S^{h}_{i,j})$ for any $h\geq 0$ and $\max(S^0_{i,j}) = \max(D_{i,j})$. Therefore,
        \[
              \frac{3t}{2} \leq \sum_{i,j} \max(S^{h'}_{i,j}) = \sum_{i,j} \max(D_{i,j}) \leq 5gt\log w.   
        \]
        Condition (iii) of Definition~\ref{def:dense} is satisfied.    
\end{proof}

\subsection{Phase Three}
We first conclude the previous two phases by the following lemma.
\begin{lemma}
\label{lem:ult-color-coding}
    Let $D$ be a multi-set of at most $n$ integers from $[1,w]$. Let $t$ be a positive integer. Let $q$ be as $0 < q < 1$. Let $g =72\log^2\frac{2m}{q}$.  In $O((n + \sqrt{wt})\polylog(n, t, \frac{1}{q}))$ time, we can either show that $D$ is dense with respect to $t$ or compute a partition $\{D_1, \ldots, D_\ell\}$ of $D$ and a set $S_i \subseteq \mathcal{S}(D_i)$ for each $D_i$ such that the following holds.
    \begin{enumerate}[label={\normalfont (\roman*)}]
        \item $\ell$ is a power of $2$ and  $\ell \leq 2n$

        \item for each $j \in [0, \lceil\log w\rceil]$, at most $\frac{t}{2^{j-1}}$ sets $D_i$ have $2^j \leq \max(D_i) < 2^{j+1}$. 

        \item $\max(D_i) \leq \max(S_i) \leq g\max(D_i)$ for any $i \in [1, \ell]$

        \item $\frac{3t}{2} \leq \sum_{i=1}^{\ell} \max(S_i) \leq 5gt\log w$ 

        \item For any $Z \subseteq D$ with $\sigma(Z) \leq t$,
        \[
            \mathbf{Pr}[\sigma(Z \cap D_i) \in S_i\textrm{ for all $i$}] \geq 1 - q.
        \]
    \end{enumerate}
\end{lemma}
\begin{proof}
    The first phase partitions $D$ into $D_1, \ldots, D_\ell$ via Lemma~\ref{lem:first-stage-color-coding}. Properties (i) and (ii) are guaranteed by Lemma~\ref{lem:first-stage-color-coding} (i) and (ii). Then the second phase process $D_1, \ldots, D_\ell$ via Algorithm~\ref{alg:second-stage}. If Algorithm~\ref{alg:second-stage} stops in line \ref{in-alg-second-stage:return-dense}, then Lemma~\ref{lem:second-stage-dense} implies that $D$ is dense with respect to $t$, and we are done. Otherwise, Algorithm~\ref{alg:second-stage} returns a set $S_i \subseteq \mathcal{S}(D_i)$ for every $i \in [1, \ell]$. Properties (iii) and (v) are ensured by Lemma~\ref{lem:second-stage-sparse}. Property (iv) is implied by Lemma~\ref{lem:first-stage-color-coding}(iii) and Lemma~\ref{lem:second-stage-sparse}(i).

    The total running time is implied by Lemma~\ref{lem:first-stage-color-coding} and Lemma~\ref{lem:second-stage-time}.
\end{proof}

Let $(D_1, \ldots, D_\ell)$ and $(S_1, \ldots, S_\ell)$ be obtained in the previous two phases. The task of phase three is to compute $S = \sum_i S_i$. There is, however, a trouble with the dense-or-sparse framework. As the sets $S_i$ are merged, their diameters (and the threshold for them to be dense) can be as large as $\Theta(nw)$. As a consequence, we need $\Theta(nw)$ time per level in the tree-like computation, which is too much for our target running time. 

To deal with this trouble, we shall use a random permutation technique, which is inspired by a similar random partition technique for knapsack~\cite{BC23,HX24,BDP24}. The intuition is the following. If we randomly permute all the sets $S_i$ (and their associated sets $D_i$), then with high probability, the contribution of $D_i$ to any $Z \subseteq D$ with $\sigma(Z) \in [t - 5\sqrt{wt}\log w, t]$ is around the mean value. This property holds even when the sets $S_i$ (and $D_i$) are merged.  As a result, in the tree-like computation, we can cap each intermediate node (i.e., sumset) with a short interval around the mean value, and hence reduce the diameters of the nodes.

\subsubsection{Some Probability Bounds}
We first formalize the above intuition.  The following lemma is an application of Bernstein's inequality. (Although the original Bernstein's inequality is only for the sum of independent random variables, a result in Hoeffding's seminal paper~\cite[Theorem 4]{Hoeff63} implies that it also works for sampling without replacement.) 
\begin{restatable}{lemma}{lempartitionandsample}\label{lem:partition-and-sample}
    Let $A$ be a multi-set of $k$ non-negative integers. Let $A_1, \ldots, A_\ell$ be a partition of $A$. Let $B$ be a multi-set of $s$ integers randomly sampled from $A$ without replacement. For any $c \geq 1$, 
    {\begin{align*}
    \mathbf{Pr}\!\left[\!\left|\sigma(B) - \frac{s}{k}\sigma(A)\right| \!>\! 4c \!\cdot\! \log \ell \!\cdot \! \sum_{i}\sqrt{|A_i|}\max(A_i)\right] \!\leq\! e^{-c}.
    \end{align*}}
\end{restatable}

To prove it, we first show that we can bound the probability that $|\sigma(B) - \frac{s}{k}\sigma(A)|$ is large.
\begin{lemma}\label{lem:hoeff}
    Let $A$ be a multi-set of $k$ non-negative integers. Let $B$ be a multi-set of $s$ integers randomly sampled from $A$ without replacement. Then for any $\eta > 0$,
    \begin{align*}
    &\mathbf{Pr}\left[\left|\sigma(B) - \frac{s}{k}\sigma(A)\right| \geq \eta\right] \\\leq &2\mathrm{exp}\left(- \min\left\{\frac{\eta^2}{4\sigma(A)\max(A)}, \frac{\eta}{2\max(A)}\right\}\right) 
    \end{align*}
\end{lemma}
\begin{proof}
    It is easy to see that $\mathbf{E}[\sigma(B)] = s\sigma(A)/k$. We shall use Bernstein's inequality to bound the probability. Although the initial Bernstein's inequality was only for the sum of independent random variables, a result in Hoeffding's seminal paper~\cite[Theorem 4]{Hoeff63} implies that it also works for sampling without replacement (see~\cite[Propostion 1.4]{BM15} for reference). Therefore, by Bernstein's inequality, for any $\eta > 0$,
    \begin{align*}
        &\mathbf{Pr}\left[\left|\sigma(B) - \frac{s}{k}\sigma(A)\right| \geq \eta\right] 
        \\\leq& 2 \mathrm{exp}\left(- \frac{\eta^2}{2\cdot\frac{s}{k}\cdot\sum_{a \in A}\left(a - {\sigma(A)}/{k}\right)^2 + \frac{2}{3}\eta\max(A)}\right) \nonumber\\
        \leq &2\mathrm{exp}\!\left(\!-\! \min\!\left\{\frac{\eta^2}{4\!\cdot\!\frac{s}{k}\!\cdot\!\sum_{a \in A}\!\left(a - {\sigma(A)}/{k}\right)^2}, \frac{\eta}{2\max(A)}\!\right\}\!\right) \label{eq:berstein}.                               
    \end{align*}
    Since that
    \begin{align*}
    \frac{s}{k}\cdot\sum_{a \in A}\left(a - \frac{\sigma(A)}{k}\right)^2  &\leq \sum_{a \in A} \left(a - \frac{\sigma(A)}{k}\right)^2 \\&\leq \sum_{a\in A} a^2 \\&\leq \max(A)\sigma(A),
    \end{align*}
    we obtain the target inequality.
\end{proof}

Now we consider one subset of $A$.

\begin{lemma}\label{lem:sample-div}
    Let $A$ be a multi-set of $k$ non-negative integers. Let $A^*$ be a subset of $A$. Let $B$ be a multi-set of $s$ integers randomly sampled from $A$ without replacement. For any $c \geq 1$, 
    \[
        \mathbf{Pr}\left[\left|\sigma(B\cap A^*) - \frac{s}{k}\sigma(A^*)\right| > 4c\sqrt{|A^*|}\max(A^*)\right] \leq e^{-c}.
    \]
\end{lemma}
\begin{proof}
    When $A^* = \emptyset$, the target inequality trivially holds.  Assume that $A^*$ is non-empty. The integers not in $A^*$ never contribute to $\sigma(B\cap A^*)$, so they can be viewed as $0$. Let $A'$ be the multi-set obtained from $A$ by replacing each integer not in $A^*$ with $0$.  Let $B'$ be a set of $k$ integers randomly sampled from $A'$.  It is easy to see that $\sigma(B' \cap A^*)$ has the same distribution as $\sigma(B \cap A)$. It suffices to show that the target inequality holds for $\sigma(B' \cap A^*)$. By Lemma~\ref{lem:hoeff}, for any $\eta > 0$,
    \begin{align*}
        &\mathbf{Pr}\left[\left|\sigma(B') - \frac{s}{k}\sigma(A')\right| > \eta\right] \\\leq& 2\mathrm{exp}\left(- \min\left\{\frac{\eta^2}{4\sigma(A')\max(A')}, \frac{\eta}{2\max(A')}\right\}\right) 
    \end{align*}
    Since all integers in $A'$ are $0$ except for those in $A^*$, the above inequality is equivalent to 
    \begin{align*}
        &\mathbf{Pr}\left[\left|\sigma(B' \cap A^*) - \frac{s}{k}\sigma(A^*)\right| > \eta\right] \\\leq& 2\mathrm{exp}\left(- \min\left\{\frac{\eta^2}{4\sigma(A^*)\max(A^*)}, \frac{\eta}{2\max(A^*)}\right\}\right).
    \end{align*}
    Let $\eta = 4c\sqrt{|A^*|}\max(A^*)$. One can verify that this probability is bounded by $\mathrm{exp}(-c)$.
\end{proof}

Now we can prove Lemma~\ref{lem:partition-and-sample}.

\begin{proof}[Proof of Lemma~\ref{lem:partition-and-sample}]
    By Lemma~\ref{lem:sample-div}, we have that, for each $A_i$,
    \begin{align*}
    &\mathbf{Pr}\left[\left|\sigma(B \cap A_i) - \frac{s}{k}\sigma(A_i)\right| > 4c \cdot \log \ell \cdot \sqrt{|A_i|}\max(A_i)\right]\\ \leq &e^{-c}/\ell.
    \end{align*}
    Note that $\sigma(B) = \sum_i {\sigma(B \cap A_i)}$ and that $\sigma(A) = \sum_i \sigma(A_i)$. Then
    \[   
    \begin{aligned}
         &\mathbf{Pr}\left[\left|\sigma(B) - \frac{s}{k}\sigma(A)\right| > 4c \cdot \log \ell \cdot  \sum_{i}\sqrt{|A_i|}\max(A_i)\right] \\
         \leq &\sum_i \mathbf{Pr}\!\left[\left|\sigma(B \cap A_i) \!-\! \frac{s}{k}\sigma(A_i)\right| > 4c\! \cdot \!\log \!\ell \!\cdot \!\sqrt{|A_i|}\max(A_i)\right]\\
         \leq &e^{-c}.
    \end{aligned}
    \]
\end{proof}

Now we use Lemma~\ref{lem:partition-and-sample} to prove a probability bound
for our algorithm.

\begin{lemma}\label{lem:small-div}
    Let $(D_1, \ldots, D_\ell)$, $(S_1, \ldots, S_\ell)$, and $q$ be as in Lemma{\normalfont~\ref{lem:ult-color-coding}}. Let $D'$ be the union of $s$ sets $D_i$ that are randomly sampled from $\{D_1, \ldots, D_\ell\}$ without replacement. For any $Z\subseteq D$ such that $\sigma(Z \cap D_i) \in S_i$ for every $i$, we have
    \begin{align*}  \mathbf{Pr}\!\left[\!\left|\sigma(Z\cap D') \!-\! \frac{s}{\ell}\sigma(Z)\right|\! > \!2304\sqrt{wt} \!\cdot \! \log^2\! w \!\cdot  \!\log^3\!\frac{2n}{q}\!\right] \!\leq \!\frac{q}{2n}.  
    \end{align*}
    
\end{lemma}
\begin{proof}
    Let $Z$ be an arbitrary subset of $D$ such that $\sigma(Z \cap D_i) \in S_i$ for every $i$. Define a multi-set $A = \{\sigma(Z \cap D_i) : i \in [1,\ell]\}$. Let $B$ be a multi-set of $s$ numbers randomly sampled from $A$ without replacement.  One can see that $\sigma(Z) = \sigma(A)$ and that $\sigma(B)$ shares the distribution with $\sigma(Z\cap D')$. It suffices to prove the stated probability bound with $\sigma(Z \cap D')$ and $\sigma(Z)$ replaced by $\sigma(B)$ and $\sigma(A)$, respectively.   We partition $A$ into $2 + \lceil\log w\rceil$ subsets $A_0, \ldots, A_{\lceil\log w\rceil + 1}$ where $A_0$ is the multi-set of all $0$'s in $A$, and for $j \geq 1$
    \[
        A_j = \{\sigma(Z \cap D_i) : 2^{j-1} \leq \max(D_i) < 2^{j}\}.
    \]
    By property (ii) of Lemma~\ref{lem:ult-color-coding}, we have $|A_j| \leq \frac{t}{2^{j-2}}$ for every $j \geq 1$. Let $g = 72\log^2\frac{2m}{q}$. Since $\sigma(Z \cap D_i) \in S_i$, by property (iii) of Lemma~\ref{lem:ult-color-coding}, we have $\sigma(Z \cap D_i) \leq g\max(D_i)$, which implies $\max(A_j) < g\cdot 2^{j}$ for every $j \geq 1$.  Therefore, for any $j \in [0, \lceil\log w\rceil + 1]$,
    \[
        \sqrt{|A_j|}\max(A_j) \leq \sqrt{\frac{t}{2^{j-2}}}\cdot g 2^{j} \leq 4g\sqrt{2^{j-w}t} \leq 4g\sqrt{wt}.
    \]
    By Lemma~\ref{lem:partition-and-sample},
    \begin{align*}
\mathbf{Pr}\Bigg[\left|\sigma(B) - \frac{s}{\ell}\sigma(A)\right| > & 4c \cdot \log (1 + \log w) \cdot\\&   \sum_{j}\left(\sqrt{|A_j|}\max(A_j)\right)\Bigg]\leq e^{-c}.
    \end{align*}
    Note that $\sum_{i}(\sqrt{|A_j|}\max(A_j)) \leq 4(1 + \lceil\log w\rceil)g\sqrt{wt}\leq 8g\sqrt{wt}\log w$. Let $c = \log \frac{2n}{q}$. We have
     \[
         \mathbf{Pr}\left[\left|\sigma(B) - \frac{s}{\ell}\sigma(A)\right| > 32g\cdot\sqrt{wt} \cdot  \log^2 w \cdot  \log\frac{2n}{q}\right] \leq \frac{q}{2n}.
    \]
    Recall that $g = 72\log^2\frac{2n}{q}$. This completes the proof.
\end{proof}

\subsubsection{The Algorithm}
Recall that, in the sparse case, we only care about $Z \subseteq D$ with $\sigma(Z) \in [t - 5\sqrt{wt}\log w, t]$. Lemma~\ref{lem:small-div} implies that for an intermediate node (sumset), we can cap it using an interval of length roughly $\widetilde{O}(\sqrt{wt})$. After capping, each intermediate node (sumset) has a small diameter, and we can use the dense-or-sparse framework.  See Algorithm~\ref{alg:permute-and-conv} for details.

\begin{algorithm}[t]
    \caption{\texttt{Random-Permutation-and-Sparse-\\Convolution}}
    \label{alg:permute-and-conv}
    \begin{algorithmic}[1]
    \Statex \textbf{Input:} $S_1, \ldots, S_{\ell}$, $D_1, \ldots, D_{\ell}$ and $q$ be as in Lemma~{\normalfont~\ref{lem:ult-color-coding}}
    \Statex \textbf{Output:} a set $S \subseteq  \mathcal{S}(D)$
    \State Permute $\{S_1, \ldots, S_{\ell}\}$ (and the associated $D_1, \ldots,$ $D_\ell$) randomly
    \State $S^0_i := S_i$ and $D^0_i := D_i$ for each $i \in [1,\ell]$
    \State $\eta: = 2304\sqrt{wt} \cdot  \log^2 w \cdot  \log^3\frac{2n}{q} + 5\sqrt{wt}\log w$
    \State $u' :=4\eta + 1$
    %\State $g: = 72\log^2\frac{2m}{q}$
    \State $\rho := 10g\log w$
    \For{$h := 1, \ldots, \log \ell$}
\State $\ell_h := \frac{\ell}{2^h}$\;
\State Run \texttt{Sum-If-Sparse}$(\{S^{h-1}_{1},\dots,S^{h-1}_{2\ell_{h}}\},\ell_h + 4\cdot C_{\ap}\rho u'\log u')$\label{in-alg-permute-and-conv:run-Sum-If-Sparse}
    \If{It returns a result $S'_{1},\dots, S'_{\ell_h}$}
    \State $S^h_{i}: = S'_{i}$ for all $i \in [1, \ell_h]$ 
    \Comment{$S^h_{i}:=S^{h-1}_{2i-1} + S^{h-1}_{2i}$} 
    \State $S^h_i := S^h_i \cap [\frac{t}{\ell_h} - \eta, \frac{t}{\ell_h} + \eta]$ for each $i \in [1, \ell_h]$\label{in-alg-permute-and-conv:cap}
    \State $D^h_i := D^{h-1}_{2i-1} \cup D^{h-1}_{2i}$ for $i \in [1, \ell_h]$\; \\\Comment{kept only for analysis}
    \Else
    \State Stop immediately and \Return \Comment{$D$ is dense}\label{in-alg-permute-and-conv:return-dense}
    \EndIf
    \EndFor
    \State \Return $S^{\log {\ell}}_1$\;
    \end{algorithmic}
\end{algorithm}

We first analyze the running time of Algorithm{\normalfont~\ref{alg:permute-and-conv}}.
\begin{lemma}\label{lem:permute-and-conv-time}
    Algorithm{\normalfont~\ref{alg:permute-and-conv}} runs in $O((n + \sqrt{wt})\cdot\polylog$ $ (n, t, \frac{1}{q}))$ time.
\end{lemma}
\begin{proof}
    Let the variables be as in Algorithm~\ref{alg:permute-and-conv}. The first 5 lines cost $O(n)$ time. The running time of the rest is dominated by the $\log \ell \leq 1 + \log n$ invocations of Algorithm~\ref{alg:sum-or-dense} in line \ref{in-alg-permute-and-conv:run-Sum-If-Sparse}, and each invocation costs $O(\eta + \ell_h + \rho u'\log u')\polylog \eta$ time.  One can verify that the total running time is $O((n + \sqrt{wt})\polylog (n, t, \frac{1}{q}))$.
\end{proof}

Next, we prove the correctness of Algorithm{\normalfont~\ref{alg:permute-and-conv}}.
\begin{lemma}\label{lem:permute-and-conv-sparse}
    If Algorithm{\normalfont~\ref{alg:permute-and-conv}} returns a set $S$, then $|S| = O(\sqrt{wt}\polylog (n, t, \frac{1}{q}))$, and for any $Z \subseteq D$ with $\sigma(Z) \in [t - 5\sqrt{wt}\log w, t]$, 
    \[
        \mathbf{Pr}[\sigma(Z) \in S] \geq 1 - 3q.
    \]
\end{lemma}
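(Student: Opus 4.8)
The plan is to analyze Algorithm~\ref{alg:permute-and-conv} in two parts: the size bound on the output $S$, and the probability that a fixed target subset $Z$ is captured. For the size bound, I would observe that whenever the algorithm successfully returns (rather than stopping in line~14), every $S^h_i$ for $h \geq 1$ has been intersected with the interval $[\frac{t}{\ell_h} - \eta, \frac{t}{\ell_h} + \eta]$ in line~11, so its diameter is at most $2\eta + 1$. Hence $|S^{\log\ell}_1| \leq 2\eta + 1 = O(\sqrt{wt}\polylog(n,t,\tfrac1q))$ by the choice of $\eta$ in line~3, which gives the first claim.

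For the probability bound, fix $Z \subseteq D$ with $\sigma(Z) \in [t - 5\sqrt{wt}\log w, t]$. First invoke Lemma~\ref{lem:ult-color-coding}(v): with probability at least $1-q$, we have $\sigma(Z \cap D_i) \in S_i$ for every $i$ — call this event $\mathcal{E}_0$. Condition on $\mathcal{E}_0$. Now the key point is that, after the random permutation in line~1, each intermediate set $D^h_i$ is the union of exactly $2^h$ of the original sets $D_j$, and (because of the permutation) these $2^h$ sets look like a uniform sample without replacement from $\{D_1,\dots,D_\ell\}$. By Lemma~\ref{lem:small-div} applied with $s = 2^h = \ell/\ell_h$, the quantity $\sigma(Z \cap D^h_i)$ is within $2304\sqrt{wt}\log^2 w \log^3\frac{2n}{q}$ of $\frac{s}{\ell}\sigma(Z) = \frac{\sigma(Z)}{\ell_h}$ with probability at least $1 - \frac{q}{2n}$ for each fixed $(h,i)$. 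Since $\sigma(Z) \in [t - 5\sqrt{wt}\log w, t]$, we get $\frac{\sigma(Z)}{\ell_h} \in [\frac{t}{\ell_h} - \frac{5\sqrt{wt}\log w}{\ell_h}, \frac{t}{\ell_h}] \subseteq [\frac{t}{\ell_h} - \eta, \frac{t}{\ell_h} + \eta]$, and combining with the deviation bound, $\sigma(Z \cap D^h_i) \in [\frac{t}{\ell_h} - \eta, \frac{t}{\ell_h} + \eta]$ — exactly the capping interval in line~11. Union-bounding over all $h \in [1,\log\ell]$ and all $i \in [1,\ell_h]$ (at most $\ell \leq 2n$ pairs in total, so total failure probability at most $\frac{q}{2n} \cdot 2n = q$), call the good event $\mathcal{E}_1$; it holds with probability $\geq 1 - q$ conditioned on $\mathcal{E}_0$, hence $\Pr[\mathcal{E}_0 \cap \mathcal{E}_1] \geq 1 - 2q$.

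It remains to argue that on $\mathcal{E}_0 \cap \mathcal{E}_1$, provided the algorithm does not stop in line~14, we have $\sigma(Z) \in S^{\log\ell}_1 = S$. I would prove by induction on $h$ that $\sigma(Z \cap D^h_i) \in S^h_i$ for every $i$. The base case $h=0$ is exactly $\mathcal{E}_0$. For the inductive step, $S^h_i$ is computed (by \texttt{Sum-If-Sparse}, when it does not report dense) as $S^{h-1}_{2i-1} + S^{h-1}_{2i}$ then intersected with the capping interval; since $\sigma(Z\cap D^h_i) = \sigma(Z \cap D^{h-1}_{2i-1}) + \sigma(Z \cap D^{h-1}_{2i})$ is a sum of two elements lying in $S^{h-1}_{2i-1}$ and $S^{h-1}_{2i}$ by the inductive hypothesis, it lies in the sumset, and by $\mathcal{E}_1$ it also lies in the capping interval, hence survives the intersection. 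Taking $h = \log\ell$ gives $\sigma(Z) = \sigma(Z \cap D) \in S^{\log\ell}_1$. The one loose end, contributing the third $q$, is that the whole argument is conditioned on the permutation being "favorable"; but in fact $\mathcal{E}_1$ already accounts for the randomness of the permutation, so actually $\Pr[\sigma(Z) \in S \text{ or the algorithm stops in line } 14] \geq 1 - 2q$, and I would simply absorb any remaining slack (e.g.\ the event $\mathcal{E}_0$ failing versus the permutation-dependent reinterpretation of Lemma~\ref{lem:small-div}) into the stated $1 - 3q$ to be safe. The main obstacle I anticipate is making precise the claim that after a uniformly random permutation the block $D^h_i$ has the same distribution as a sample-without-replacement of size $2^h$ — one has to be careful that the sampling in Lemma~\ref{lem:small-div} is of \emph{sets} $D_i$, not of individual integers, and that Lemma~\ref{lem:small-div} is stated for a single fixed block, so the union bound over all $\ell_h$ blocks at a given level and over all levels is what drives the $\frac{q}{2n}$ per-event budget; getting the constants in $\eta$ to line up with the constant $2304$ from Lemma~\ref{lem:small-div} plus the $5\sqrt{wt}\log w$ slack is the only genuinely fiddly computation, and it has been arranged precisely so that line~3's definition of $\eta$ matches.
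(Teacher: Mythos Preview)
Your proposal is correct and follows essentially the same approach as the paper's proof: condition on $\sigma(Z\cap D_i)\in S_i$ for all $i$ (cost $q$ via Lemma~\ref{lem:ult-color-coding}(v)), use Lemma~\ref{lem:small-div} plus a union bound over all intermediate $D^h_i$ to ensure each $\sigma(Z\cap D^h_i)$ survives the capping interval, and then run the obvious induction up the tree. Your count of the intermediate nodes ($\sum_{h\ge 1}\ell_h=\ell-1\le 2n$) is in fact slightly tighter than the paper's $2\ell\le 4n$, which is why you arrive at $1-2q$ before relaxing to the stated $1-3q$; the paper simply uses the looser count to land directly on $3q$.
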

\begin{proof}
    Let the variables be defined as in Algorithm~\ref{alg:permute-and-conv}. Let $Z$ be an arbitrary subset of $D$ with $\sigma(Z) \in [t - 5\sqrt{wt}\log w, t]$. Assume that $\sigma(Z \cap D_i) \in S_i$ for all $i$. Lemma~\ref{lem:ult-color-coding} guarantees that the error probability of this assumption is at most $q$. Now consider an arbitrary $D^h_i$. Due to the random permutation of $\{S_1, \ldots, S_{\ell}\}$ (and the associated $D_1, \ldots, D_\ell$), the set $D^h_i$ can be view as the union of $2^h$ sets $D_i$ that are randomly sampled from $D_1, \ldots, D_\ell$. By Lemma~\ref{lem:small-div}, we have 
    \[
        \mathbf{Pr}\left[\left|\sigma(Z\!\cap\! D^h_i) \!-\! \frac{1}{\ell_h}\!\sigma(Z)\right|\! > \!32g\!\cdot\!\sqrt{wt} \!\cdot\!  \log^2\! w \!\cdot\!  \log\frac{2n}{q}\right] \!\leq\! \frac{q}{2n}.  
    \]
    Since $\sigma(Z) \in [t - 5\sqrt{wt}\log w, t]$, for $\eta = 2304\sqrt{wt} \cdot  \log^2 w \cdot  \log^3\frac{2n}{q} + 5\sqrt{wt}\log w$, we have
    \[
        \mathbf{Pr}\left[\left|\sigma(Z\cap D^h_i) - \frac{t}{\ell_h}\right| > \eta \right] \leq \frac{q}{2n}.  
    \]
    Note that there are at most $2\ell \leq 4n$ sets $D^h_i$. We assume that $\sigma(Z\cap D^h_i) \in [\frac{t}{\ell_h} - \eta, \frac{t}{\ell_h} + \eta]$ for all $D^h_i$. The error probability of this assumption is at most $\frac{q}{2n} \cdot 2\ell = 2q$. Since the cap operation in line \ref{in-alg-permute-and-conv:cap} of the algorithm does not lose $\sigma(Z\cap D^h_i)$, we have $\sigma(Z) \in S^{\log \ell}_1$. The total error probability is $3q$.

    Moreover, $|S^{\log \ell}_1| \leq 2\eta = O(\sqrt{wt}\polylog (n, t, \frac{1}{q}))$.
\end{proof}

\begin{lemma}\label{lem:permute-and-conv-dense}
    If Algorithm~\ref{alg:permute-and-conv} stops in line \ref{in-alg-permute-and-conv:return-dense}, then $D$ is dense with respect to $t$.
\end{lemma}
\begin{proof}
    Suppose that the algorithm stops when $h=h'$. Let the variables be defined as in Algorithm~\ref{alg:permute-and-conv}. We show that $D$ satisfies all the conditions of Definition~\ref{def:dense}. The sets
    $S^{h'}_1, \ldots, S^{h'}_{\ell_{h'}}$ already satisfy the condition (i) of Definition~\ref{def:dense}. We still need to show that there is a function $f$ satisfying conditions (ii) and (iii) of Definition~\ref{def:dense}.
    We define a function $f$ recursively as follows.
    \begin{align*}
        &f(S^0_i) = \max(S_i)\\
        &f(S^{h+1}_i) = f(S^h_{2i-1}) + f(S^h_{2i})
    \end{align*}
    To satisfy conditions (ii) and (iii) of Definition~\ref{def:dense}, it suffices to show that $\frac{3t}{2} \leq \sum_{i=1}^{\ell_h} f(S^{h}_i) \leq 5gt\log w$ and $\sigma(D^{h}_i) \geq f(S^{h}_i)$ for any $h$. We prove by induction on $h$. When $h = 0$, we have $f(S^{0}_i) = \max(S_i)$. Since $S_1, \ldots, S_\ell$ is given by Lemma~\ref{lem:ult-color-coding}, we have
    \[
        \frac{3t}{2} \leq \sum_{i=1}^\ell \max(S_i) \leq 5gt\log w.
    \]
    Moreover, $\max(S_i) \leq \sigma(D_i)$ since $S_i \subseteq \mathcal{S}(D_i)$. So the two inequalities hold for $h = 0$. 
    Suppose that the two inequalities hold for $h$. We show that it holds for $h+1$. By the definition of $f$, we have $\sum_{i=1}^{\ell_{h+1}} f(S^{h+1}_i) = \sum_{i=1}^{\ell_h} f(S^h_i)$. By the inductive hypothesis,
    \[
        \frac{3t}{2} \leq \sum_{i=1}^{\ell_{h+1}} f(S^{h+1}_i) \leq 5gt\log w.
    \]
    Moreover, $D^{h+1}_i = D^h_{2i-1} \cup D^h_{2i}$ and $f(S^{h+1}_i) = f(S^h_{2i-1}) + f(S^h_{2i})$. By the inductive hypothesis, we also have 
    \[
        f(S^{h+1}_i) \leq \sigma(D^{h+1}_i). \qedhere
    \]
\end{proof}

\subsection{Putting Things Together}
We conclude the sparse case by proving Lemma~\ref{lem:sparse}. The key lemma (Lemma~\ref{lem:key}) will follow straightforwardly by Lemma~\ref{lem:good-ap} and Lemma~\ref{lem:sparse}.
\lemsparse*
\begin{proof}
    We first process $D$ via Lemma~\ref{lem:ult-color-coding} with $q = (n + t)^{-\Omega(1)}$. The time cost is $\widetilde{O}(n + \sqrt{wt})$.  If it shows that $D$ is dense with respect to $t$, then we are done. Otherwise, it returns a partition $D_1, \ldots, D_\ell$ of $D$ and a set $S_i \subseteq \mathcal{S}(D_i)$ for each $D_i$. 
    We process them via Algorithm~\ref{alg:permute-and-conv}. By Lemma~\ref{lem:permute-and-conv-time}, the time cost is again $\widetilde{O}(n + \sqrt{wt})$. If it shows that $D$ is dense, then we are done. Otherwise, by Lemma~\ref{lem:permute-and-conv-dense}, it returns a set $S$ of $\widetilde{O}(\sqrt{wt})$ size such that for any $Z \subseteq D$ with $\sigma(Z) \in [t - 5\sqrt{wt}\log w, t]$, 
    \[
        \mathbf{Pr}[\sigma(Z) \in S] \geq 1 - (n+t)^{-\Omega(1)}.
    \]
Then we can cap $S$ as $S: = S \cap [t-5\sqrt{wt}\log w, t]$. We have that $S \subseteq \mathcal{S}(D) \cap [t - 5\sqrt{wt}\log w, t]$ and that $S$ contains all $s \in \mathcal{S}(D) \cap [t - 5\sqrt{wt}\log w, t]$ with probability $1 - (n + t)^{-\Omega(1)}$.
\end{proof}

\section{Conclusion}
We obtain a randomized $\widetilde{O}(n+\sqrt{wt})$-time algorithm for Subset Sum, which improves the $\widetilde{O}(n+t)$-time algorithm by Bringmann~\cite{Bri17} in the regime $t \geq w$.

An important open question in this line of research is whether Subset Sum can be solved in $\widetilde{O}(n+w)$ time. It is worth mentioning that when the input is a set, we can obtain an $\widetilde{O}(n+w^{1.25})$-time algorithm. However, when the input is a multi-set, in the regime of $t\gg w^2$ the $\widetilde{O}(n+w^{1.5})$-time algorithm by Chen, Lian, Mao and Zhang~\cite{CLMZ24aSODA} remains the best so far. It is interesting if one can get an algorithm for Subset Sum in $\widetilde{O}(n+w^{1.5-\varepsilon})$-time for a fixed constant $\epsilon>0$. 

\backmatter

%\bmhead{Acknowledgements}

%{\color{red}Acknowledgements are not compulsory. Where included they should be brief. Grant or contribution numbers may be acknowledged.}

\section*{Declarations}

\bmhead{Funding} 

This work was supported by National Natural Science Foundation of China (Project No. [6257070197], [62402436], and [12271477]).

\bmhead{Conflict of interest} 

The authors have no competing interests to declare that are relevant to the content of this article.

%%===================================================%%
%% For presentation purpose, we have included        %%
%% \bigskip command. Please ignore this.             %%
%%===================================================%%

%%===========================================================================================%%
%% If you are submitting to one of the Nature Portfolio journals, using the eJP submission   %%
%% system, please include the references within the manuscript file itself. You may do this  %%
%% by copying the reference list from your .bbl file, paste it into the main manuscript .tex %%
%% file, and delete the associated \verb+\bibliography+ commands.                            %%
%%===========================================================================================%%

\bibliography{sn-bibliography}% common bib file
%% if required, the content of .bbl file can be included here once bbl is generated
%%\input sn-article.bbl

\end{document}